\newif\ifnewinterpolation
\newcommand{\ceilfrac}[2]{\left\lceil \frac{#1}{#2} \right\rceil}
\newcommand{\floorfrac}[2]{\left\lfloor \frac{#1}{#2} \right\rfloor}
\newcommand{\proj}{\downharpoonright}
\newcommand{\euf}{$\mathcal{EUF}$\xspace}
\newcommand{\laq}{$\mathcal{LA}\left(\mathbb{Q}\right)$\xspace}
\newcommand{\laz}{$\mathcal{LA}\left(\mathbb{Z}\right)$\xspace}
\newcommand{\symb}{\mathop{\mathit{symb}}}
\newcommand{\eps}{\varepsilon}
\renewcommand{\S}{}
\newcommand{\W}{}
\newcommand{\xor}{\mathrel{\mbox{xor}}}
\renewcommand{\S}{^S}
\newcommand{\W}{^W}
\title{Proof Tree Preserving Interpolation
\thanks{This work is supported by the
  German Research Council (DFG) as part of the Transregional Collaborative
  Research Center ``Automatic Verification and Analysis of Complex Systems''
  (SFB/TR14 AVACS)}}
\author{J{\"u}rgen Christ \and Jochen Hoenicke \and Alexander Nutz}
\institute{Chair of Software Engineering, University of Freiburg}
\begin{document}

\maketitle

\begin{abstract}

  Craig interpolation in SMT is difficult because, e.\,g., theory combination
  and integer cuts introduce mixed literals, i.\,e., literals
  containing local symbols from both input formulae.  In this paper,
  we present a scheme to compute Craig interpolants in the presence of
  mixed literals.  Contrary to existing approaches, this scheme
  neither limits the inferences done by the SMT solver, nor does it
  transform the proof tree before extracting interpolants.  Our scheme
  works for the combination of uninterpreted functions and linear
  arithmetic but is extendable to other theories.  The
  scheme is implemented in the interpolating SMT solver SMTInterpol.
\end{abstract}

\section{Introduction}

A Craig interpolant for a pair of formulae $A$ and $B$ whose conjunction is
unsatisfiable is a formula $I$ that follows from $A$ and whose conjunction
with $B$ is unsatisfiable.  Furthermore, $I$ only contains symbols common
to $A$ and $B$. Model checking and state space abstraction~\cite{henzinger04afp,mcmillan06lai} make
intensive use of interpolation to achieve a higher degree of automation. This
increase in automation stems from the ability to fully automatically generate
interpolants from proofs produced by modern theorem provers. 

For propositional logic, a SAT solver typically produces resolution-based
proofs that show the unsatisfiability of an error path.
Extracting Craig interpolants from such
proofs is a well understood and easy task that can be accomplished, e.\,g.,
using the algorithms of Pudl\'ak~\cite{DBLP:journals/jsyml/Pudlak97} or
McMillan~\cite{DBLP:conf/tacas/McMillan04}.  An essential property of
the proofs generated by SAT solvers is that every proof step only involves
literals that occur in the input.

This property does not hold for proofs produced by SMT solvers for formulae
in a combination of first order theories.  Such solvers produce new literals
for different reasons.  First, to combine two theory solvers, SMT solvers
exchange (dis-)equalities between the symbols common to these two theories in
a Nelson-Oppen-style theory combination.
Second, various techniques dynamically generate new literals to simplify proof
generation. Third, new literals are introduced in the context of a
branch-and-bound or branch-and-cut search for non-convex theories. The
theory of linear integer arithmetic for example is typically solved by
searching a model for the relaxation of the formula to linear rational arithmetic
and then using branch-and-cut with Gomory cuts or \emph{extended
  branches}~\cite{Dillig2011} to remove the current non-integer solution 
from the solution space of the relaxation.

The literals produced by either of these techniques only contain symbols that are already present in the input.  However, a
literal produced by one of these techniques may be \emph{mixed}\footnote{Mixed
  literals sometimes are called \emph{uncolourable}.} in the sense that it may
contain symbols occurring only in $A$ and symbols occurring only in $B$.
These
literals pose the major difficulty when extracting interpolants from proofs
produced by SMT solvers.


In this paper, we present a scheme to compute Craig interpolants in the
presence of mixed literals.  Our interpolation scheme is based on
syntactical restrictions of \emph{partial interpolants} and specialised rules to
interpolate resolution steps on mixed literals. 
This enables us to compute interpolants in the context of a state-of-the-art SMT
solver without manipulating the proof tree or restricting the solver in any
way.
We base our presentation
on the quantifier-free fragment of the combined theory of uninterpreted
functions and linear arithmetic over the rationals or the integers.
The interpolation scheme is used in the interpolating SMT solver
SMTInterpol~\cite{toolpaper}.  
\begin{tacas}
Proofs for the theorems in this paper are
given in the technical report~\cite{atr}.
\end{tacas}


\subsubsection*{Related Work.}

\begin{techreport}
Craig~\cite{Craig2011} shows in his seminal work on interpolation that for
every inconsistent pair of first order formulae an interpolant can be
derived. In the proof of the corresponding theorem he shows how to construct
interpolants without proofs by introducing quantifiers in the interpolant.
\end{techreport}
For Boolean circuits, Pudl\'ak~\cite{DBLP:journals/jsyml/Pudlak97} shows how
to construct quantifier-free interpolants from resolution proofs of
unsatisfiability.
\begin{techreport}

\end{techreport}
A different proof-based interpolation system is given by
McMillan~\cite{DBLP:conf/tacas/McMillan04} in his seminal paper on
interpolation for SMT. The presented method combines the theory of equality
and uninterpreted functions with the theory of linear rational
arithmetic. Interpolants are computed from partial interpolants by annotating
every proof step. The partial interpolants have a specific form that carries
information needed to combine the theories. The proof system is incomplete for
linear integer arithmetic as it cannot deal with arbitrary cuts and mixed
literals introduced by these cuts.

Brillout et al.~\cite{DBLP:conf/vmcai/BrilloutKRW11} present an interpolating
sequent calculus that can compute interpolants for the combination of
uninterpreted functions and linear integer arithmetic. The interpolants
computed using their method might contain quantifiers since they do not use
divisibility predicates. Furthermore their method limits the generation of
Gomory cuts in the integer solver to prevent some mixed cuts. The method
presented in this paper combines the two theories without quantifiers and,
furthermore, does not restrict any component of the solver.

Yorsh and Musuvathi~\cite{Yorsh2005} show how to combine interpolants
generated by an SMT solver based on Nelson-Oppen combination. They define the
concept of \emph{equality-interpolating theories}. These are theories that
can provide a shared term $t$ for 
a mixed literal $a=b$ that is derivable from an interpolation problem.
A troublesome mixed interface equality $a=b$ is rewritten into the
conjunction $a=t\land t=b$. They show that both, the theory of uninterpreted
functions and the theory of linear rational arithmetic are
equality-interpolating. We do not explicitly split the proof.
Additionally, our method can handle the
theory of linear integer arithmetic without any restriction on the solver. 
\begin{techreport}
The method of Yorsh and Musuvathi, however, cannot deal with cuts used by most
modern SMT solvers to decide linear integer arithmetic.
\end{techreport}

Cimatti et al.~\cite{Cimatti2010} present a method to compute interpolants for
linear rational arithmetic and difference logic. The method presented in this paper builds upon
their interpolation technique for linear rational arithmetic. For theories combined via delayed
theory combination, they show how to compute interpolants by
transforming a proof into a so-called \emph{ie-local} proof.  In these proofs,
mixed equalities are close to the leaves of the proof tree and
splitting them is cheap since the proof trees that have to be
duplicated are small. 
\begin{techreport}
A variant of this restricted search strategy is used by
MathSAT~\cite{Griggio2012} and CSIsat~\cite{DBLP:conf/cav/BeyerZM08}.
\end{techreport}

Goel et al.~\cite{Goel2009} present a generalisation of
equality-interpolating theories. They define the class of
\emph{almost-colourable proofs} and an algorithm to generate interpolants from
such proofs. Furthermore they describe a restricted DPLL system to generate
almost-colourable proofs. This system does not restrict the search if convex
theories are used. Their procedure is incomplete for non-convex theories like
linear arithmetic over integers since it prohibits the generation of mixed 
branches and cuts.

Recently, techniques to transform proofs gained a lot of
attention. Bruttomesso et al.~\cite{Bruttomesso2010} present a framework to
lift resolution steps on mixed literals into the leaves of the resolution
tree. Once a subproof only resolves on mixed literals, they replace this
subproof with the conclusion removing the mixed inferences. 
The newly generated lemmas however are mixed between different theories and require
special interpolation procedures. Even though these procedures only have to
deal with conjunctions of literals in the combined theories it is not obvious
how to compute interpolants in this setting. 
\begin{techreport}
Similar to our algorithm, they do
not restrict or interact with the SMT solver but take the proof as produced by
the solver. 
\end{techreport}
In contrast to our approach, they manipulate the proof in a way
that is worst-case exponential and rely on an interpolant generator for the
conjunctive fragment of the combined theories.

McMillan~\cite{McMillan12iZ3} presents a technique
to compute interpolants from Z3 proofs. 
Whenever a sub-proof contains mixed
literals, he 
extracts lemmas from the proof tree and delegates them to a second
(possibly slower) interpolating solver.

\begin{techreport}
For the theory of linear integer arithmetic \laz\ a lot of
different techniques were proposed.  Lynch et
al.~\cite{DBLP:conf/atva/LynchT08} present a method that produces interpolants
as long as no mixed cuts were introduced.  In the presence of such cuts, their
interpolants might contain symbols that violate the symbol condition of Craig
interpolants.

For linear Diophantine equations and linear modular equations, Jain et
al.~\cite{DBLP:journals/fmsd/JainCG09} present a method to compute linear
modular equations as interpolants. Their method however is limited to
equations and, thus, not suitable for the whole theory \laz.

Griggio~\cite{Griggio2011} shows how to compute interpolants for \laz\ based
on the \laz-solver from MathSAT~\cite{Griggio2012}. This solver uses
branch-and-bound and the cuts from proofs~\cite{Dillig2011} technique. Similar
to the technique presented by Kroening et
al.~\cite{DBLP:conf/lpar/KroeningLR10} the algorithm prevents generating mixed
cuts and, hence, restricts the inferences done by the solver.
\end{techreport}

\section{Preliminaries}

\begin{techreport}
In this section, we give an overview of what is needed to understand the
procedure we will propose in the later sections. We will briefly introduce the
logic and the theories used in this paper.
Furthermore, we define key terms like Craig interpolants and symbol sets.
\end{techreport}

\paragraph*{Logic, Theories, and SMT.}
We assume standard first-order logic. We operate within
the quantifier-free fragments of the theory of equality with uninterpreted
functions \euf and the theories of linear arithmetic over rationals \laq
and integers \laz.
The quantifier-free fragment of \laz is not closed under
interpolation.  Therefore, we augment the signature with
division by constant functions $\floorfrac{\cdot}{k}$ for all integers $k\geq 1$.

We use the standard notations $\models_T, \bot, \top$ to denote
entailment in the theory $T$, contradiction, and tautology.  In the
following, we drop the subscript $T$ as it always corresponds to the combined
theory of \euf, \laq, and \laz.

The literals in \laz are of the form $s \leq c$, where $c$ is an
integer constant and $s$ a linear combination of variables.  For \laq
we use constants $c \in \mathbb{Q}_\eps$, 
$\mathbb{Q}_\eps := \mathbb{Q} \cup \{q - \eps | q\in
\mathbb{Q}\}$ where the meaning of $s \leq q-\eps$ is $s < q$.  For
better readability we use, e.\,g., $x \leq y$ resp.\ $x > y$ to denote
$x-y\leq 0$ resp.\ $y-x \leq -\eps$.  In the integer case we use $x > y$ 
to denote $y-x \leq -1$.

Our algorithm operates on a proof of unsatisfiability generated
by an SMT solver based on DPLL$(T)$~\cite{dpllt}.
Such a proof is a resolution tree with the
$\bot$-clause at its root. The leaves of the tree are either clauses
from the input formulae\footnote{W.\,l.\,o.\,g.\ we assume input formulae are
  in conjunctive normal form.} or theory lemmas that are produced by one of
the theory solvers. The negation of a theory lemma is called a
\emph{conflict}.

The theory solvers for \euf, \laq, and \laz are working independently
and exchange (dis-)equality literals through the DPLL engine in a
Nelson-Oppen style~\cite{DBLP:journals/toplas/NelsonO79}.  Internally,
the solver for linear arithmetic uses only inequalities in
theory conflicts.  In the proof tree, the (dis-)equalities are related
to inequalities by the (valid) clauses $x=y \lor x<y \lor x>y$, and
$x\neq y \lor x\leq y$.  We call these leaves of the proof tree
\emph{theory combination clauses}.

\paragraph*{Interpolants and Symbol Sets.}

For a formula $F$, we use $\symb(F)$ to denote the set of non-theory
symbols occurring in $F$.  An interpolation problem is given by two
formulae $A$ and $B$ such that $A \land B \models \bot$.  
An interpolant of $A$ and $B$ is a formula $I$ such that 
(i) $A \models I$, (ii) $B \land I \models \bot$, and 
(iii) $\symb(I) \subseteq \symb(A)\cap \symb(B)$.

We call a symbol $s \in \symb(A)\cup\symb(B)$ \emph{shared}
if $s \in \symb(A) \cap \symb(B)$, \emph{$A$-local} if $s \in
\symb(A) \setminus \symb(B)$, and \emph{$B$-local} if $s \in
\symb(B) \setminus \symb(A)$.
Similarly, we call a term \emph{$A$-local} (\emph{$B$-local}) if it
contains at least one $A$-local ($B$-local) and no
$B$-local ($A$-local) symbols.
We call a term \emph{($AB$-)shared} if it contains only shared
symbols and \emph{($AB$-)mixed} if it contains $A$-local as well as $B$-local
symbols. The same terminology applies to formulae.

\paragraph*{Substitution in Formulae and Monotonicity.}
By $F[G_1]\ldots[G_n]$ we denote a formula in negation normal form with
sub-formulae $G_1,\ldots,G_n$ that occur positively in the formula.  Substituting
these sub-formulae by formula $G_1',\ldots,G_n'$ is denoted by $F[G_1']\ldots[G_n']$.  By $F(t)$ we
denote a formula with a sub-term $t$ that can appear anywhere in $F$.  The substitution of $t$ with a
term $t'$ is denoted by $F(t')$.

\begin{tacas}
The following lemma is important for the correctness proofs of our
interpolation scheme.
\end{tacas}
\begin{techreport}
The following lemma is important for the correctness proofs in the remainder
of this technical report. It also represents a concept that is important for
the understanding of the proposed procedure.
\end{techreport}

\begin{lemma}[Monotonicity]\label{lemma:monotonicity}
 Given a formula $F[G_1]\ldots[G_n]$ in negation normal form with
 sub-formulae $G_1,\ldots,G_n$ occurring only positively in
 the formula and formulae $G_1',\ldots,G_n'$, it holds that
 \[\left( \bigwedge_{i\in\{1,\ldots,n\}} (G_i \rightarrow G_i') \right)
   \rightarrow (F[G_1]\ldots[G_n] \rightarrow F[G_1']\ldots[G_n'])\]
\end{lemma}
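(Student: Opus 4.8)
The plan is to prove this by structural induction on the formula $F$ in negation normal form, treating the occurrences of the distinguished subformulae $G_1,\ldots,G_n$ as the base case and the logical connectives as the inductive steps. Throughout, I assume the hypothesis $H := \bigwedge_{i} (G_i \rightarrow G_i')$ and aim to show $H \rightarrow (F[\bar G] \rightarrow F[\bar G'])$, where $\bar G$ abbreviates the tuple of substituted subformulae. The crucial structural fact I will exploit is that, because $F$ is in negation normal form and each $G_i$ occurs \emph{positively}, negation is pushed to the atoms and no $G_i$ ever sits under a negation. This is exactly what makes the implication $G_i \rightarrow G_i'$ propagate in the right direction.

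First I would set up the base cases. If $F$ is an atom (or negated atom) containing none of the $G_i$, then $F[\bar G]$ and $F[\bar G']$ are literally the same formula and the implication $F[\bar G] \rightarrow F[\bar G']$ is trivially valid, so $H$ entails it. If $F$ is exactly one of the distinguished subformulae, say $F = G_i$, then $F[\bar G] = G_i$ and $F[\bar G'] = G_i'$, and the desired implication is precisely the $i$-th conjunct of $H$; hence $H \rightarrow (G_i \rightarrow G_i')$ holds.

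For the inductive step I would handle the two connectives permitted in negation normal form, conjunction and disjunction, and invoke monotonicity of each. Suppose $F = F_1 \land F_2$; by the induction hypothesis we have $H \rightarrow (F_1[\bar G] \rightarrow F_1[\bar G'])$ and likewise for $F_2$. Assuming $H$ and $F_1[\bar G] \land F_2[\bar G]$, each conjunct yields the corresponding primed conjunct, so $F_1[\bar G'] \land F_2[\bar G']$ follows; the disjunction case $F = F_1 \lor F_2$ is symmetric, since from either disjunct holding in the unprimed formula we obtain the same disjunct primed. (Distinguished subformulae may be split across the two operands, but since the induction hypothesis is stated for arbitrary positive-occurrence families, each $F_j$ carries its own share of the $G_i$ and the single global hypothesis $H$ suffices for both.) Combining the cases closes the induction.

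\textbf{The main obstacle} is not any single calculation but getting the bookkeeping of the positivity condition exactly right: I must make the induction hypothesis general enough to allow \emph{any subset} of the distinguished subformulae to appear in each operand, and I must be careful that negation normal form genuinely forbids a $G_i$ from occurring under a negation. If quantifiers or other connectives were admitted the monotonicity direction could fail, so the whole argument hinges on the syntactic restriction to $\land$ and $\lor$ over literals. Once that invariant is pinned down, each inductive case is a routine appeal to the monotonicity of the connective, and no deeper machinery is needed.
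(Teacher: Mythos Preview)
Your proposal is correct and follows essentially the same approach as the paper: structural induction on $F$ (equivalently, induction on the number of $\land$/$\lor$ connectives), with the base cases being a literal distinct from all $G_i$ and the case $F\equiv G_i$, and the inductive step handling $\land$ and $\lor$ via their evident monotonicity. The paper's proof is terser but the argument is the same.
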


\begin{techreport}
\begin{proof}
  We prove the claim by induction over the number of $\land$ and $\lor$
  connectives in $F[\cdot]\ldots[\cdot]$.  If $F[G_1]\ldots[G_n]$ is 
  a literal different from $G_1,\ldots,G_n$ the implication holds trivially.
  Also for the other base case $F[G_1]\ldots[G_n]\equiv G_i$ for some
  $i\in\{1,\dots,n\}$ the property holds. For the induction step
  observe that if $F_1[G_1]\ldots[G_n]\rightarrow  F_1[G'_1]\ldots[G'_n]$ and
  $F_2[G_1]\ldots[G_n]\rightarrow  F_2[G'_1]\ldots[G'_n]$, then 
  \begin{align*}
  F_1[G_1]\ldots[G_n] \land F_2[G_1]\ldots[G_n] &\rightarrow
  F_1[G'_1]\ldots[G'_n] \land F_2[G'_1]\ldots[G'_n] \text{ and }\\
  F_1[G_1]\ldots[G_n] \lor F_2[G_1]\ldots[G_n] &\rightarrow 
  F_1[G'_1]\ldots[G'_n] \lor F_2[G'_1]\ldots[G'_n].
  \end{align*}

  \vspace{-20pt}\strut\qed
\end{proof}
\end{techreport}

\section{Proof Tree-Based Interpolation}

Interpolants can be computed from proofs of unsatisfiability as
Pudl\'ak and McMillan have already shown.  In this section we will
introduce their algorithms.  Then, we will discuss the changes
necessary to handle mixed literals introduced, e.\,g., by theory combination.

\subsection{Pudl\'ak's and McMillan's Interpolation Algorithms}

Pudl\'ak's and McMillan's algorithms assume that the pivot literals are not mixed.
We will remove this restriction later.  We define a common
framework that is more general and can be instantiated to obtain Pudl\'ak's or McMillan's
algorithm to compute interpolants.  For this, we use two projection
functions on literals $\cdot\proj A$ and $\cdot\proj B$ as defined
below.
They have the properties (i)
$\symb(\ell\proj A)\subseteq\symb(A)$, (ii) $\symb(\ell\proj
B)\subseteq\symb(B)$, and (iii) $\ell \iff (\ell\proj A \land
\ell\proj B)$.  Other projection functions are possible and this
allows for varying the strength of the resulting interpolant as shown
in \cite{D'Silva2010}.  We extend the projection function to
conjunctions of literals component-wise.

\medskip
\centerline{
  \renewcommand{\arraystretch}{1.1}
 \begin{tabular}{l|c|c|c|c|}
    & \multicolumn{2}{c|}{Pudl\'ak}
    & \multicolumn{2}{c|}{McMillan}\\
    &$\ell\proj A$&$\ell\proj B$
    &$\ell\proj A$&$\ell\proj B$\\
   \hline
    $\ell$ is $A$-local
    & $\ell$ & $\top$ & $\ell$ & $\top$ \\
    $\ell$ is $B$-local
    & $\top$ & $\ell$ & $\top$ & $\ell$ \\
    $\ell$ is shared
    & $\ell$ & $\ell$ & $\top$ & $\ell$
  \end{tabular}
}
\medskip

Given an interpolation problem $A$ and $B$, a \emph{partial
interpolant} of a clause $C$ is an interpolant of the formulae $A
\land (\lnot C \proj A)$ and $B \land (\lnot C \proj B)$\footnote{Note that $\lnot C$ is a conjunction of literals. Thus, 
$\lnot C\proj A$ is well defined.}.
Partial interpolants can be computed inductively over the structure of
the proof tree.  A partial interpolant of a theory lemma $C$ can be
computed by a theory-specific interpolation routine as an interpolant
of $\lnot C \proj A$ and $\lnot C \proj B$. Note that the conjunction
is equivalent to $\lnot C$ and therefore unsatisfiable.  For an input
clause $C$ from the formula $A$ (resp.\ $B$), a partial interpolant is
$\lnot(\lnot C\setminus A)$ (resp.\ $\lnot C \setminus B$) where
$\lnot C\setminus A$ is the conjunction of all literals of $\lnot C$ 
that are not in $\lnot C \proj A$ and analogously for $\lnot C\setminus B$.  For a
resolution step, a partial interpolant can be computed using (\ref{rule:res}),
which is given below. For this rule, it is easy to show that $I_3$ is a partial
interpolant of $C_1\lor C_2$ given that $I_1$ and $I_2$ are partial
interpolants of $C_1\lor \ell$ and $C_2\lor \lnot \ell$, respectively. Note
that the ``otherwise'' case never triggers in McMillan's algorithm.
\begin{equation}\tag{rule-res}\label{rule:res}
\inferrule{C_1\lor \ell : I_1 \quad C_2\lor \lnot \ell : I_2}
            {C_1\lor C_2 : I_3} \quad 
	    \text{where }I_3 = \begin{cases}
	      I_1 \lor I_2 & \text{if }\ell\proj B = \top\\
	      I_1 \land I_2 & \text{if }\ell\proj A = \top\\
	      \begin{array}{l}(I_1\lor\ell) \land{}\\
	      (I_2\lor \lnot \ell)\end{array} & \text{otherwise}
	    \end{cases}
\end{equation}
As the partial interpolant of the root of the proof tree (which is
labelled with the clause $\bot$) is an interpolant of the input
formulae $A$ and $B$, this algorithm can be used to compute interpolants.

\begin{theorem}
 The above-given partial interpolants are correct, i.e., if
  $I_1$ is a partial interpolant of $C_1 \lor \ell$ 
  and $I_2$ is a partial interpolant of $C_2 \lor \lnot \ell$ 
  then $I_3$ is a partial interpolant of the  clause $C_1 \vee C_2$.
\end{theorem}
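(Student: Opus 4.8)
The goal is to verify that the three cases of (\ref{rule:res}) each produce a partial interpolant of $C_1 \vee C_2$. By definition, I must show that $I_3$ satisfies the three interpolant conditions for the problem $A \land (\lnot(C_1\vee C_2)\proj A)$ and $B \land (\lnot(C_1\vee C_2)\proj B)$: namely (i) the $A$-side entails $I_3$, (ii) the $B$-side together with $I_3$ is unsatisfiable, and (iii) the symbol condition holds.

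Let me sketch the proof.

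The plan is to treat each of the three cases separately, using the hypotheses that $I_1$ is a partial interpolant of $C_1 \lor \ell$ and $I_2$ is a partial interpolant of $C_2 \lor \lnot\ell$.

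First I would unfold the definitions. Since $\lnot(C_1 \vee C_2) = \lnot C_1 \land \lnot C_2$, and projection distributes over conjunction component-wise, the $A$-side of the goal is $A \land (\lnot C_1 \proj A) \land (\lnot C_2 \proj A)$, and similarly for $B$. The hypotheses tell me that $A \land (\lnot C_1 \proj A) \land (\ell \proj A) \models I_1$ with $I_1$ falsified on the matching $B$-side, and symmetrically for $I_2$ using $\lnot\ell$. The symbol condition (iii) is the easiest part: in every case $I_3$ is built from $I_1$, $I_2$, and possibly $\ell$; since $I_1, I_2$ are already shared by hypothesis, I only need to check that $\ell$ contributes no bad symbols, and the ``otherwise'' case is precisely where $\ell$ is shared (both $\ell\proj A \neq \top$ and $\ell\proj B \neq \top$), so $\ell$ is shared there and (iii) holds.

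For the entailment and unsatisfiability conditions I would reason case by case. When $\ell\proj B = \top$, the literal $\ell$ is entirely on the $A$-side, so $\lnot\ell$ contributes nothing to the $B$-projection of $C_2$; I would show $A$-side entails $I_1 \lor I_2$ by a case split on whether $\ell$ holds, and dually show the $B$-side refutes $I_1 \lor I_2$. The case $\ell\proj A = \top$ is symmetric with $\land$. The ``otherwise'' case, where $\ell$ is shared, is the main obstacle: here $I_3 = (I_1 \lor \ell)\land(I_2 \lor \lnot\ell)$, and I must carefully track how $\ell\proj A = \ell$ and $\ell\proj B = \ell$ move between the two sides. The entailment direction uses that $A$-side plus $\ell$ gives $I_1$ (so $A$-side gives $I_1 \lor \lnot\ell$, hence $I_1 \lor \ell$ after rearranging via the shared literal), while the refutation direction combines the two conjuncts against the $B$-side by resolving on $\ell$ once more. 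I expect the bookkeeping of which projected copy of $\ell$ lives on which side, rather than any deep idea, to be where care is needed; the Monotonicity Lemma (Lemma~\ref{lemma:monotonicity}) is the tool that lets me push the implications through the propositional structure cleanly.
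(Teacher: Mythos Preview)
Your overall plan matches the paper's proof: establish the symbol condition (easy, as you note), then verify inductivity and contradiction by the same three-way case split on the projection of $\ell$. Two points need fixing.

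First, you have the roles of $\ell$ and $\lnot\ell$ swapped in the hypotheses. Since $I_1$ is a partial interpolant of $C_1 \lor \ell$, the negated clause is $\lnot C_1 \land \lnot\ell$, so the inductivity hypothesis reads $A \land (\lnot C_1 \proj A) \land (\lnot\ell \proj A) \models I_1$, not ``$A$-side plus $\ell$ gives $I_1$''. Symmetrically, $A$-side plus $\ell\proj A$ gives $I_2$. This swap is exactly what causes the non sequitur in your ``otherwise'' case: you derive $I_1 \lor \lnot\ell$ and then claim to obtain $I_1 \lor \ell$ by ``rearranging via the shared literal'', which is not a valid step. With the correct hypotheses the argument is immediate: in the shared case $\lnot\ell\proj A = \lnot\ell$, so the $A$-side gives $\lnot\ell \rightarrow I_1$, i.e.\ $I_1 \lor \ell$; and it gives $\ell \rightarrow I_2$, i.e.\ $I_2 \lor \lnot\ell$; conjoining yields $I_3$. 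The contradiction direction is the symmetric case split on $\ell$, just as the paper does.

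Second, the Monotonicity Lemma is not needed anywhere in this proof; the reasoning is purely propositional on the top-level connectives of $I_3$. The paper reserves that lemma for the later theorems concerning the specialised rules for mixed literals.
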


\begin{techreport}
\begin{proof}
 The third property, i.e., $\symb(I_3) \subseteq \symb(A) \cap \symb(B)$, 
 clearly holds
 if we assume it holds for $I_1$ and $I_2$. Note that in the
 ``otherwise'' case, $\ell$ is shared.
 We prove the other two partial interpolant properties separately.
 \subsubsection*{Inductivity.} We have to show
 \[A \land \lnot C_1\proj A \land \lnot C_2\proj A \models I_3.\]
 For this we use the inductivity of $I_1$
 and $I_2$:
 \begin{align*}
  & A \land \lnot C_1 \proj A \land \lnot \ell\proj A \models I_1 \tag{ind1} \\
  & A \land \lnot C_2 \proj A \land \ell\proj A \models I_2 \tag{ind2}
 \end{align*}  

 Assume $A$, $\lnot C_1\proj A$, and $\lnot C_2\proj A$. Then, (ind1)
 simplifies to $\lnot\ell \proj
 A \rightarrow I_1$ and (ind2) simplifies to $\ell \proj A \rightarrow I_2$.  We show that
 $I_3$ holds under these assumptions.

\paragraph{Case $\ell\proj B = \top$.}
 
 Then by the definition of the projection function, $\ell\proj A = \ell$ 
 and $\lnot \ell\proj A = \lnot \ell$ hold. If $\ell$ holds, (ind2) gives us
 $I_2$, otherwise (ind1) gives us $I_1$, thus $I_3 = I_1\lor I_2$ holds in both cases.
 
\paragraph{Case $\ell\proj A = \top$.}
 
 Then (ind1) gives us $I_1$ because $\lnot \ell\proj A = \top$ 
 (the negation of $\ell$ is still not in $A$), and (ind2) gives us $I_2$. 
 So $I_3 = I_1 \land I_2$ holds.
 
\paragraph{Case ``otherwise''.}
 
 By the definition of the projection function
 $\ell\proj A = \ell\proj B = \ell$ and 
 $\lnot \ell\proj A = \lnot \ell\proj B = \lnot \ell$. If $\ell$ holds,
 the left conjunct $(I_1 \lor \ell)$ of $I_3$ holds and the right
 conjunct $(I_2\lor \lnot\ell)$ of $I_3$ is fulfilled because (ind2) gives us $I_2$.
 If $\lnot \ell$ holds, (ind1) gives us $I_1$ and both conjuncts of $I_3$ hold.
 
\subsubsection*{Contradiction.} 
 We have to show:
 
 \[B \land \lnot C_1\proj B \land \lnot C_2\proj B \land I_3 \models \bot\]
 
 We use the contradiction properties of $I_1$ and $I_2$:
 \begin{align*}
  & B \land \lnot C_1\proj B \land \lnot \ell\proj B \land I_1 \models \bot \tag{cont1}\\
  & B \land \lnot C_2\proj B \land \ell \proj B \land I_2 \models \bot \tag{cont2}
 \end{align*}  

 If we assume $B$, $\lnot C_1\proj B$, and $\lnot C_2\proj B$, (cont1)
 simplifies to $\lnot \ell\proj B \land I_1 \rightarrow \bot$ and (cont2)
 simplifies to $\ell\proj B \land I_2 \rightarrow \bot$. We show $I_3
 \rightarrow \bot$.

 \paragraph{Case $\ell\proj B = \top$.}

 Then (cont1) and $\lnot \ell\proj B = \top$ give us $I_1 \rightarrow \bot$,
 and (cont2) and $\ell \proj B = \top$ give us $I_2 \rightarrow \bot$. 
 Thus $I_3 \equiv I_1 \lor I_2$ is contradictory.

 \paragraph{Case $\ell\proj A = \top$.}

 Then $\ell \proj B = \ell$ and $\lnot \ell \proj B = \lnot \ell$. Then, if 
 $\ell$ holds, (cont2) gives us $I_2 \rightarrow \bot$. If $\lnot \ell$ holds, 
(cont1) gives us
 $I_1 \rightarrow \bot$ analogously. In both cases,
 $I_3 \equiv I_1 \land I_2$ is contradictory.

 \paragraph{Case ``otherwise''.}
 
 By the definition of the projection function 
 $\ell\proj A = \ell\proj B = \ell$ and 
 $\lnot \ell\proj A = \lnot \ell\proj B = \lnot \ell$ hold.
 Assuming $I_3 \equiv (I_1\lor\ell) \land (I_2\lor \lnot\ell)$ holds, we prove
 a contradiction.
 If $\ell$ holds, the second conjunct of $I_3$ implies $I_2$.
 Then, (cont2) gives us a contradiction. 
 If $\lnot \ell$ holds, the first conjunct of $I_3$ implies $I_1$
 and (cont1) gives us a contradiction.
\qed 
\end{proof}
\end{techreport}

\subsection{Purification of Mixed Literals}\label{sec:purification}

The proofs generated by state-of-the-art SMT solvers may contain mixed literals. We tackle them by extending the projection functions to
these literals.  The problem here is that there is no projection
function that satisfies the conditions stated in the previous section.
Therefore, we relax the conditions by allowing fresh auxiliary
variables to occur in the projections.

We consider two different kinds of mixed literals: First, 
(dis-\nobreak)equalities of the form $a=b$ or $a\neq b$ for an $A$-local
variable $a$ and a $B$-local variable $b$ are introduced, e.\,g., by theory
combination or Ackermannization. Second, inequalities of the form $a + b \leq
c$ are introduced, e.\,g., by extended branches~\cite{Dillig2011} or bound
propagation. Here, $a$ is a linear combination of $A$-local variables,
$b$ is a linear combination of $B$-local and shared variables, and $c$ is a
constant.
\begin{techreport}
Adding the shared variable to the $B$-part is an arbitrary choice.  One gets
interpolants of different strengths by assigning some shared variables to the
$A$-part. It is only important to keep the projection of each literal
consistent throughout the proof.
\end{techreport}

\ifnewinterpolation
We split mixed literals using auxiliary variables, which we denote by
$x$ or $p_x$ in the following.  The variable $p_x$ has the type Boolean,
while $x$ has the same type as the variables in the literal.  One or
two fresh variables are introduced for each mixed literal.  We count
these variables as shared between $A$ and $B$.  The purpose of the
auxiliary variable $x$ is to capture the shared value that needs to be
propagated between $A$ and $B$.  When splitting a literal $\ell$ into
$A$- and $B$-part, we require that $\ell \Leftrightarrow \exists
x,p_x. (\ell\proj A) \land (\ell\proj B)$.  We need the additional
Boolean variable $p_x$ to split the literal $a\neq b$ into two (nearly)
symmetric parts.  This is achieved by the definitions below.
\begin{align*}
(a=b)\proj A & := (a=x) &
(a=b)\proj B & := (x=b) \\
(a\neq b)\proj A & := (p_x \xor a=x) &
(a\neq b)\proj B & := (\lnot p_x \xor x=b) \\
(a + b \leq c)\proj A & := (a + x\leq 0) &
(a + b \leq c)\proj B & := (-x + b\leq c)
\end{align*}
\else
We split mixed literals using auxiliary variables, which we denote by
$x$, $x_a$, or $x_b$ in the following.  One or two fresh variables are
introduced for each mixed literal.  We count these variables as shared
between $A$ and $B$.  The purpose of the auxiliary variables is to
capture the shared value that needs to be propagated between $A$ and $B$.
When splitting a literal $\ell$ into $A$- and $B$-part, we require that
$\ell \Leftrightarrow \exists x,x_a,x_b. (\ell\proj A) \land (\ell\proj B)$.
We need two variables $x_a$ and $x_b$ to split the literal $a\neq b$ into two
symmetric parts.  For symmetry we split the literal $a=b$ in the same fashion
instead of introducing only a single auxiliary variable.
This is achieved by the definitions below.
\begin{align*}
(a=b)\proj A & := (a=x_a \land x_a = x_b) &
(a=b)\proj B & := (x_a = x_b  \land x_b=b) \\
(a\neq b)\proj A & := (a=x_a \land x_a \neq x_b) &
(a\neq b)\proj B & := (x_a \neq x_b  \land x_b=b) \\
(a + b \leq c)\proj A & := (a + x\leq 0) &
(a + b \leq c)\proj B & := (-x + b\leq c)
\end{align*}
\fi
Since the mixed variables are considered to be shared, we allow them
to occur in the partial interpolant of a clause $C$.  However, a
variable may only occur if $C$ contains the corresponding literal.
This is achieved by a special interpolation rule for resolution steps
where the pivot literal is mixed.
The rules for the different mixed
literals are the core of our proposed algorithm and will be introduced
in the following sections.  

\ifnewinterpolation\else
Instead of with a single partial interpolant, we label each clause with a
pattern from which we can derive two partial interpolants, a strong and a weak
one.  The strong interpolant of a clause $C$ implies the weak interpolant under
the assumption that $\lnot C\proj A$ or $\lnot C\proj B$ holds.
Having two
interpolants enables us to complete the inductive proof.  
We show that the strong interpolant follows from the $A$-part of the
resolvent if the strong interpolants of the premises follow from their
respective $A$-part.  On the other hand, the weak interpolant is in
contradiction to the $B$-part in the resolvent if this is the case for
the premises.  Since the weak interpolant follows from the strong
interpolant this shows that both are partial interpolants.
The models for the strong and the weak interpolants only differ in the
values of the auxiliary variable.  The interpolants are needed because 
the ``right'' value
for the auxiliary variable is not known when interpolating the leaves 
of the proof tree.
The strong and the weak interpolant are identical if the clause does not
contain mixed literals.  Therefore, we derive only one interpolant for the
bottom clause.
\fi

\begin{techreport}
  \ifnewinterpolation
  \begin{lemma}[Partial Interpolation]\label{lem:weakstrongip}
    Given a mixed literal $\ell$ with auxiliary variable(s) $\vec x$
    and clauses $C_1\lor\ell$ and $C_2\lor\lnot\ell$ with corresponding 
    partial interpolants $I_1$ and $I_2$.  Let $C_3=C_1\lor C_2$ 
    be the result of a resolution step
    on $C_1\lor\ell$ and $C_2\lor\lnot\ell$ with pivot $\ell$.
    If a partial interpolant $I_3$ satisfies the symbol
    condition, and
    \begin{align*}
      &(\forall \vec x.\: (\lnot \ell \proj A \rightarrow I_1)  \land 
      (\ell \proj A \rightarrow I_2)) \rightarrow I_3 \tag{ind}\\
      &I_3 \rightarrow (\exists \vec x.\:
      (\lnot \ell \proj B \land I_1)  \lor
      (\ell \proj B \land I_2)) \tag{cont}
    \end{align*}
    then $I_3$ is a partial interpolant of $C_3$.
  \end{lemma}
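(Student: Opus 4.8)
The plan is to verify directly the two semantic obligations in the definition of a partial interpolant for $C_3=C_1\lor C_2$, namely inductivity $A\land(\lnot C_3\proj A)\models I_3$ and the contradiction property $B\land(\lnot C_3\proj B)\land I_3\models\bot$; the symbol condition is granted by hypothesis. Since $\lnot C_3\proj A=(\lnot C_1\proj A)\land(\lnot C_2\proj A)$ and likewise for $B$, the raw material is the four partial-interpolant facts for the premises: from $I_1$ being a partial interpolant of $C_1\lor\ell$ I get $A\land(\lnot C_1\proj A)\land(\lnot\ell\proj A)\models I_1$ (ind1) and $B\land(\lnot C_1\proj B)\land(\lnot\ell\proj B)\land I_1\models\bot$ (cont1), and from $I_2$ being a partial interpolant of $C_2\lor\lnot\ell$ I get $A\land(\lnot C_2\proj A)\land(\ell\proj A)\models I_2$ (ind2) and $B\land(\lnot C_2\proj B)\land(\ell\proj B)\land I_2\models\bot$ (cont2). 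The decisive structural fact I would lean on throughout is that the auxiliary variables $\vec x$ are fresh: they occur in none of $A$, $B$, $C_1$, $C_2$, $I_3$, but only inside the projections of the pivot $\ell$ and (bound) in (ind) and (cont).

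For inductivity I would reduce to applying (ind): it suffices to show that $A\land(\lnot C_1\proj A)\land(\lnot C_2\proj A)$ entails the antecedent of (ind), i.e. $\forall\vec x.\,(\lnot\ell\proj A\rightarrow I_1)\land(\ell\proj A\rightarrow I_2)$. Because $\vec x$ does not occur in the left-hand conjunction, this is equivalent to proving the two conjuncts separately with $\vec x$ free, namely $A\land(\lnot C_1\proj A)\land(\lnot C_2\proj A)\land(\lnot\ell\proj A)\models I_1$ and $A\land(\lnot C_1\proj A)\land(\lnot C_2\proj A)\land(\ell\proj A)\models I_2$. These are exactly (ind1) and (ind2) weakened by one extra conjunct on the left, so they hold by monotonicity of entailment; (ind) then delivers $I_3$.

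For the contradiction property I would run the argument through (cont). From (cont), $I_3$ entails $\exists\vec x.\,(\lnot\ell\proj B\land I_1)\lor(\ell\proj B\land I_2)$, so it is enough to refute $B\land(\lnot C_1\proj B)\land(\lnot C_2\proj B)$ together with this existential. Freshness of $\vec x$ again lets me drop the existential and treat $\vec x$ as free, after which I case-split on the disjunction: in the left case (cont1) yields a contradiction from $\lnot\ell\proj B\land I_1$ (with $\lnot C_2\proj B$ added harmlessly), and in the right case (cont2) yields one from $\ell\proj B\land I_2$ (with $\lnot C_1\proj B$ added). Both disjuncts being unsatisfiable, the whole conjunction is, which establishes the contradiction property and hence the lemma.

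The one point demanding care — and the only genuine obstacle — is the bookkeeping around the quantifiers on $\vec x$. The whole construction rests on $\vec x$ being shared yet absent from $C_3$ and hence from $I_3$, $A$, $B$, $C_1$, and $C_2$: this is what makes it legitimate to commute the metalevel entailment with the object-level $\forall\vec x$ in (ind) and $\exists\vec x$ in (cont), collapsing both conditions onto the plain premise facts (ind1)/(ind2) and (cont1)/(cont2). I would therefore state the freshness assumption explicitly before invoking it, observing that if $\vec x$ occurred in $B$ (or in $C_1,C_2$) the existential in (cont) could not be pulled out and treated as a free variable and the case split would break down.
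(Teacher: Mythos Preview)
Your proposal is correct and follows essentially the same route as the paper's proof: establish inductivity by showing the antecedent of (ind) from (ind1)/(ind2), pushing the entailment under the universal quantifier because $\vec x$ is absent from $A$ and the $C_i$; establish the contradiction property by unfolding (cont), eliminating the existential via the same freshness observation, and case-splitting on the disjunction using (cont1)/(cont2). Your explicit bookkeeping around the freshness of $\vec x$ is exactly the point the paper relies on (it states it as ``$\vec x$ does not appear in $C_1\proj A$, $C_2\proj A$ nor $A$''), and your remark that this is the only delicate step is apt.
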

  \else
  To concretise our setting we label a clause $C$ of the proof tree with an
  interpolation pattern $I$ from which we derive two interpolants $I\S$ and
  $I\W$.  The condition that the strong interpolant implies the weak
  interpolant can be expressed as
  \[ \lnot C\proj A \lor \lnot C\proj B \models I\S \rightarrow I\W.
  \tag{s-w}\]
  We will ensure this property when we define how the weak and strong
  interpolants are derived from an interpolant pattern.
  
  \begin{lemma}[Strong-Weak-Interpolation]\label{lem:weakstrongip}
    Given a mixed literal $\ell$ with auxiliary variable(s) $\vec x$
    and clauses $C_1\lor\ell$ and $C_2\lor\lnot\ell$ with corresponding 
    partial interpolant patterns $I_1$ resp.\ $I_2$, i.\,e., $I_1\S$ and
    $I_1\W$ (resp.\ $I_2\S$ and $I_2\W$) are partial interpolants of
    $C_1\lor \ell$ (resp.\ $C_2\lor\lnot\ell$).  Let $C_3=C_1\lor C_2$ 
    be the result of a resolution step
    on $C_1\lor\ell$ and $C_2\lor\lnot\ell$ with pivot $\ell$.
    If a partial interpolant pattern $I_3$ satisfies (s-w), the symbol
    condition, and
    \begin{align*}
      &(\forall \vec x.\: (\lnot \ell \proj A \rightarrow I_1\S)  \land 
      (\ell \proj A \rightarrow I_2\S)) \rightarrow I\S_3 \tag{ind}\\
      &I_3\W \rightarrow (\exists \vec x.\:
      (\lnot \ell \proj B \land I_1\W)  \lor
      (\ell \proj B \land I_2\W)) \tag{cont}
    \end{align*}
    then $I_3\S$ and $I_3\W$ are partial interpolants of $C_3$.
  \end{lemma}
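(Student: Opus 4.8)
The plan is to unfold the definition of a partial interpolant of $C_3$ and reduce the two non-trivial obligations — inductivity and the contradiction property — to the corresponding properties of $I_1$ and $I_2$, invoking the given implications (ind) and (cont). Since the symbol condition is assumed, it remains to establish
\[A \land (\lnot C_3\proj A) \models I_3 \qquad\text{and}\qquad B \land (\lnot C_3\proj B) \land I_3 \models \bot.\]
From $I_1$ being a partial interpolant of $C_1\lor\ell$ I get $A \land (\lnot C_1\proj A) \land (\lnot\ell\proj A) \models I_1$ (ind1) and $B \land (\lnot C_1\proj B) \land (\lnot\ell\proj B) \land I_1 \models \bot$ (cont1); symmetrically from $I_2$ and $C_2\lor\lnot\ell$ I obtain (ind2) and (cont2), with $\ell$ in place of $\lnot\ell$. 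I also record the distributivity of the projection over the clause, $\lnot C_3\proj A = (\lnot C_1\proj A)\land(\lnot C_2\proj A)$ and likewise for $B$, together with the decisive observation that the auxiliary variables $\vec x$ are fresh and hence do not occur in $A$, $B$, $C_1$, or $C_2$: they are introduced solely by projecting the pivot $\ell$, which appears in neither $C_1$ nor $C_2$.

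For inductivity I would assume $A\land(\lnot C_1\proj A)\land(\lnot C_2\proj A)$ and aim to discharge the antecedent of (ind). Because $\vec x$ is fresh for this context, proving the universally quantified antecedent reduces to verifying, for an arbitrary valuation of $\vec x$, the two implications $\lnot\ell\proj A\rightarrow I_1$ and $\ell\proj A\rightarrow I_2$: the first follows from (ind1) by adding the hypothesis $\lnot\ell\proj A$, the second from (ind2) by adding $\ell\proj A$. Then (ind) yields $I_3$.

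For the contradiction property I would assume $B\land(\lnot C_1\proj B)\land(\lnot C_2\proj B)\land I_3$ and apply (cont), obtaining $\exists\vec x.\,(\lnot\ell\proj B\land I_1)\lor(\ell\proj B\land I_2)$. Fixing a witness for $\vec x$ — legitimate precisely because $\vec x$ is fresh for $B$, $C_1$, $C_2$, so the surrounding assumptions are untouched — I split on the disjunction: the left disjunct contradicts (cont1) and the right disjunct contradicts (cont2), so $\bot$ follows in either case.

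The only real subtlety, and the step I would treat most carefully, is the handling of the quantifiers over $\vec x$. The whole point of phrasing (ind) with $\forall\vec x$ and (cont) with $\exists\vec x$ is that $\vec x$ may genuinely occur in $I_1$ and $I_2$ yet must be absent from $I_3$ (forced by the symbol condition, since $C_3$ no longer contains the pivot). The manipulations above — discharging the $\forall$ against a context in which $\vec x$ is fresh, and choosing an existential witness against a fixed model — are sound exactly because $\vec x$ does not occur free in $A$, $B$, $C_1$, or $C_2$. I would therefore make this freshness claim explicit at the outset, as it is the hinge of both parts; everything else is routine propositional reasoning under the entailments (ind1)--(cont2).
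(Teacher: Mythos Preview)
Your argument follows the paper's proof almost exactly: establish inductivity by using freshness of $\vec x$ to push a universal quantifier outside (ind1) and (ind2), then apply (ind); establish the contradiction property by applying (cont), instantiating the existential, and splitting on the disjunction to invoke (cont1) or (cont2). The emphasis you place on freshness of $\vec x$ is exactly the point the paper highlights.

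There is, however, one genuine omission. The lemma asserts that \emph{both} $I_3^S$ and $I_3^W$ are partial interpolants of $C_3$, so four obligations must be discharged, not two. Your inductivity argument, which goes through (ind), only yields $A\land(\lnot C_3\proj A)\models I_3^S$; your contradiction argument, which goes through (cont), only yields $B\land(\lnot C_3\proj B)\land I_3^W\models\bot$. You never show inductivity for $I_3^W$ or the contradiction property for $I_3^S$, and in fact you never mention the hypothesis (s-w) at all. The paper closes exactly this gap: from (s-w) one has $\lnot C_3\proj A\models I_3^S\rightarrow I_3^W$, which immediately transfers inductivity to $I_3^W$, and $\lnot C_3\proj B\models I_3^S\rightarrow I_3^W$, which transfers the contradiction property back to $I_3^S$. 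Without invoking (s-w) your proof is incomplete; with it, the missing half is a one-line addendum to each part.
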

  \fi

  \begin{proof}
    We need to show inductivity and contradiction for the partial interpolants.
    \subsubsection*{Inductivity.}
    For this we use inductivity of $I_1\S$ and $I_2\S$:
    \begin{align*}
      & A \land \lnot C_1 \proj A \land \lnot \ell\proj A \models I_1\S \\
      & A \land \lnot C_2 \proj A \land \ell\proj A \models I_2\S
    \end{align*}
    Since $\vec x$ does not appear in $C_1\proj A$, $C_2 \proj A$ nor $A$,
    we can conclude
    \begin{align*}
      & A \land \lnot C_1 \proj A \models \forall \vec x.\: \lnot \ell\proj A \rightarrow I_1\S \\
      & A \land \lnot C_2 \proj A \models \forall \vec x.\: \ell\proj A \rightarrow I_2\S
    \end{align*}
    Combining these and pulling the quantifier over the conjunction gives
    \begin{align*}
      & A \land \lnot C_1 \proj A \land \lnot C_2 \proj A 
       \models \forall \vec x.\: (\lnot \ell\proj A \rightarrow I_1\S) \land
                                (\ell\proj A \rightarrow I_2\S)
    \end{align*}
    Using (ind), this shows that inductivity for  $I_3\S$ holds:
    \[A \land \lnot C_1\proj A \land \lnot C_2\proj A \models I_3\S.\]
    \ifnewinterpolation\else
    Using (s-w) we immediately get inductivity for $I_3\W$:
    \[A \land \lnot C_1\proj A \land \lnot C_2\proj A \models I_3\W.\]
    \fi

    \subsubsection*{Contradiction.}
    First, we show the contradiction property for $I_3\W$:
    \[B \land \lnot C_1 \proj B \land \lnot C_2 \proj B \land I_3\W \models \bot.\]
    Assume the formulae on the left-hand side hold.  From (cond) we can conclude that there is some $\vec x$ such that 
    \[(\lnot \ell \proj B \land I_1\W)  \lor
      (\ell \proj B \land I_2\W)\]
    If the first disjunct is true we can derive the contradiction
    using the contradiction property of $I_1\W$:
    \begin{align*}
      & B \land \lnot C_1 \proj B \land \lnot \ell\proj B \land I_1\W \models \bot 
    \end{align*}  
    Otherwise, the second disjunct holds and we can use the contradiction
    property of $I_2\W$
    \begin{align*}
      & B \land \lnot C_2 \proj B \land \ell\proj B  \land I_2\W \models \bot
    \end{align*}  
    This shows the contradiction property for $I_3\W$.
    \ifnewinterpolation\else
    Using (s-w) we immediately get the contradiction property for $I_3\S$: 
    \[B \land \lnot C_1 \proj B \land \lnot C_2 \proj B \land I_3\S \models
    \bot.\]

    \vspace{-19pt}\strut
    \fi\qed
  \end{proof}
\end{techreport}

It is important to state here that the given purification of a
literal into two new literals is not a modification of the proof tree
or any of its nodes. The proof tree would no longer be well-formed if
we replaced a mixed literal by the disjunction or conjunction of the
purified parts.  The purification is only used to define partial
interpolants of clauses.  In fact, it is only used in the
correctness proof of our method and is not even done explicitly in the
implementation.

\begin{techreport}

\subsection{Lemma Used in the Correctness Proof}

The following lemma will help us prove the correctness of our proposed new 
interpolation rules.


\begin{lemma}[Deep Substitution]
  Let $F_1[G_{11}]\dots[G_{1n}]$ and $F_2[G_{21}]\dots[G_{2m}]$ be two formulae with sub-formulae $G_{1i}$ for $1 \leq i \leq n$ and $G_{2j}$ for $1 \leq j \leq m$ occurring positively in $F_1$ and $F_2$.

    If $\bigwedge_{i\in \{1,\dots,n\}} \bigwedge_{j\in \{1,\dots,m\}} G_{1i} \land G_{2j} \rightarrow G_{3ij}$ 
    holds, then
    \begin{align*}
      &F_1[G_{11}]\dots [G_{1n}] \land F_2[G_{21}]\dots [G_{2m}] 
      \rightarrow\\& F_1[F_2[G_{311}]\dots[G_{31m}]]\dots
                   [F_2[G_{3n1}]\dots[G_{3nm}]].
    \end{align*}

\end{lemma}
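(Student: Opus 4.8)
The plan is to derive the claim by two nested applications of the Monotonicity lemma (Lemma~\ref{lemma:monotonicity}): first I would rewrite the inner formula $F_2$, and then the outer formula $F_1$. Throughout I assume the two conjuncts on the left-hand side, $F_1[G_{11}]\ldots[G_{1n}]$ and $F_2[G_{21}]\ldots[G_{2m}]$, and aim to establish the substituted formula on the right.

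First I would fix an index $i\in\{1,\ldots,n\}$ and show
\[
F_2[G_{21}]\ldots[G_{2m}] \land G_{1i} \rightarrow F_2[G_{3i1}]\ldots[G_{3im}].
\]
To see this, assume $G_{1i}$. Then the hypothesis $G_{1i}\land G_{2j}\rightarrow G_{3ij}$ collapses to $G_{2j}\rightarrow G_{3ij}$ for every $j\in\{1,\ldots,m\}$. Since the subformulae $G_{2j}$ occur only positively in the NNF formula $F_2$, Monotonicity applied to $F_2$ (replacing each $G_{2j}$ by $G_{3ij}$) yields $F_2[G_{21}]\ldots[G_{2m}] \rightarrow F_2[G_{3i1}]\ldots[G_{3im}]$, which is exactly what is needed. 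Discharging the assumption and using the premise $F_2[G_{21}]\ldots[G_{2m}]$, I obtain, for each $i$, the implication $G_{1i}\rightarrow F_2[G_{3i1}]\ldots[G_{3im}]$.

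With these $n$ implications in hand, the outer step is a second appeal to Monotonicity, this time applied to $F_1$. Taking $G_{1i}' := F_2[G_{3i1}]\ldots[G_{3im}]$, the conjunction $\bigwedge_{i}(G_{1i}\rightarrow G_{1i}')$ holds under the premise $F_2[G_{21}]\ldots[G_{2m}]$, and since each $G_{1i}$ occurs only positively in $F_1$, Lemma~\ref{lemma:monotonicity} gives
\[
F_1[G_{11}]\ldots[G_{1n}] \rightarrow F_1[F_2[G_{311}]\ldots[G_{31m}]]\ldots[F_2[G_{3n1}]\ldots[G_{3nm}]].
\]
Combining this with the remaining premise $F_1[G_{11}]\ldots[G_{1n}]$ yields the desired conclusion.

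The main obstacle --- really the only point requiring care --- is bookkeeping: I must verify that the positivity hypotheses of Monotonicity are met at both applications (the $G_{2j}$ inside $F_2$ and the $G_{1i}$ inside $F_1$, both given as positive occurrences), and observe that the replacement formulae in Lemma~\ref{lemma:monotonicity} may be arbitrary, so that substituting the compound formula $F_2[G_{3i1}]\ldots[G_{3im}]$ for $G_{1i}$ is legitimate. One should also check that the assumption $G_{1i}$ can be discharged correctly in the inner step, i.e.\ that $G_{1i}\rightarrow(F_2[G_{21}]\ldots[G_{2m}]\rightarrow F_2[G_{3i1}]\ldots[G_{3im}])$ reorders into the claimed implication once the premise $F_2[G_{21}]\ldots[G_{2m}]$ is assumed; this is routine propositional reasoning.
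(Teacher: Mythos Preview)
Your proof is correct and follows essentially the same approach as the paper: two nested applications of the Monotonicity lemma, first to $F_2$ (under the assumption $G_{1i}$, so that $G_{2j}\rightarrow G_{3ij}$ for each $j$) and then to $F_1$ (replacing each $G_{1i}$ by $F_2[G_{3i1}]\ldots[G_{3im}]$). The paper presents the same argument as a chain of propositional equivalences and implications rather than in prose, but the structure and the two invocations of monotonicity are identical.
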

\begin{proof}
  \begin{align*}
   &\hphantom{{}\Rightarrow}
    \bigwedge_{i\in \{1,\dots,n\}} \bigwedge_{j\in \{1,\dots,m\}} 
     ((G_{1i} \land G_{2j}) \rightarrow G_{3ij})\\
    &\Leftrightarrow
    \bigwedge_{i\in \{1,\dots,n\}} \bigwedge_{j\in \{1,\dots,m\}} 
     (G_{1i} \rightarrow (G_{2j} \rightarrow G_{3ij}))\\
    &\Leftrightarrow
    \bigwedge_{i\in \{1,\dots,n\}} (G_{1i} \rightarrow 
      \bigwedge_{j\in \{1,\dots,m\}} (G_{2j} \rightarrow G_{3ij}))\\
    \text{\{monotonicity\}} &\Rightarrow
    \bigwedge_{i\in \{1,\dots,n\}} (G_{1i} \rightarrow 
      (F_2[G_{21}]\dots [G_{2m}] \rightarrow F_2[G_{3i1}]\dots [G_{3im}]))\\
    &\Leftrightarrow
    \bigwedge_{i\in \{1,\dots,n\}} (F_2[G_{21}]\dots [G_{2m}] \rightarrow
     (G_{1i} \rightarrow 
      F_2[G_{3i1}]\dots [G_{3im}]))\\
    &\Leftrightarrow
    (F_2[G_{21}]\dots [G_{2m}] \rightarrow
    \bigwedge_{i\in \{1,\dots,n\}} (G_{1i} \rightarrow 
      F_2[G_{3i1}]\dots [G_{3im}]))\\
    \text{\{monotonicity\}}&\Rightarrow
    (F_2[G_{21}]\dots [G_{2m}] \rightarrow (F_1[G_{11}]\dots [G_{1n}]
    \rightarrow \\
     & \qquad
    F_1[F_2[G_{311}]\dots [G_{31m}]]\dots [F_2[G_{3n1}]\dots
      [G_{3nm}]]))\\
    &\Leftrightarrow
     (F_1[G_{11}]\dots [G_{1n}] \land F_2[G_{21}]\dots [G_{2m}]) \rightarrow \\
       &\qquad F_1[F_2[G_{311}]\dots [G_{31m}]]\dots [F_2[G_{3n1}]\dots [G_{3nm}]]))
  \end{align*}
  \vspace{-12pt}\strut\qed
\end{proof}

\end{techreport}

\section{Uninterpreted Functions}
In this section we will present the part of our algorithm that is
specific to the theory \euf.  The only mixed atom that is considered
by this theory is $a=b$ where $a$ is $A$-local and $b$ is $B$-local.

\subsection{Leaf Interpolation}

The \euf solver is based on the congruence closure
algorithm~\cite{Detlefs2005}. The theory lemmas are generated from
conflicts involving a single disequality that is in contradiction to a
path of equalities.  Thus, the clause generated from such a conflict
consists of a single equality literal and several disequality
literals.

When computing the partial interpolants of the theory lemmas, we
internally split the mixed literals according to
Section~\ref{sec:purification}.  Then we use an algorithm similar
to~\cite{Fuchs2009} to compute an interpolant.  This algorithm
basically summarises the $A$-equalities that are adjacent on the path
of equalities.  

\ifnewinterpolation
If the theory lemma contains a mixed equality $a=b$ (without negation),
it corresponds to the single
disequality in the conflict.  
This disequality is split into $p_x \xor a=x$ and $\lnot p_x \xor x= b$
and the resulting interpolant depends on the value of $p_x$.  If $p_x=\bot$,
the disequality is part of the $B$-part and $x$ is the
end of an equality path summing up the equalities from $A$.
Thus, the computed interpolant contains a literal of the form $x=s$.
If $p_x=\top$, then the $A$-part of the literal is $a \neq x$, and the
resulting interpolant contains the literal $x\neq s$ instead.
Thus, the resulting interpolant can be put into the form $I[p_x \xor x=s]$.
Note that the formula $p_x\xor x=s$ occurs positively in the
interpolant and is the only part of the interpolant containing $x$ and $p_x$.
We define
\begin{align*}
  EQ(x,s) &:= (p_x \xor x = s)
\end{align*}
and require that the partial interpolant of a clause containing the
literal $a=b$ always has the form $I[EQ(x,s)]$ where $x$ and $p_x$ do not
occur anywhere else.

For theory lemmas containing the literal $a\neq b$, the corresponding
auxiliary variable $x$ may appear anywhere in the partial
interpolant, even under a function symbol.  A simple example is the
theory conflict $s\neq f(a) \land a=(x =)b \land f(b)=s$,
which has the partial interpolant $s\neq f(x)$.  In general the 
partial interpolant of such a clause has the form $I(x)$.
\else
If the theory lemma contains a mixed equality $a=b$ (without negation),
it corresponds to the single
disequality in the conflict.  
This disequality is split into $a=x_a$,
$x_a\neq x_b$ and $x_b =b$ and the resulting interpolant depends on
whether we consider the disequality to belong to the $A$-part or to
the $B$-part.  If we consider it to belong to the $B$-part, then $x_a$ is the
end of an equality path summing up the equalities from $A$.
Thus, the computed interpolant has the form $I[x_a=s]$.
If we consider $x_a\neq x_b$ to belong to the $A$-part, the
resulting interpolant is $I[x_b\neq s]$.
Note that in both cases the literal $x_a=s$ resp.\ $x_b\neq s$ occurs positively in the
interpolant and is the only literal containing $x_a$ resp.\ $x_b$.
To summarise, the partial interpolant computed for a theory clause
$C\lor a=b$ where $a=b$ has the auxiliary variables $x_a,x_b$ has the form
$I[x_a=s]$ or $I[x_b\neq s]$ and $x_a,x_b$ do not appear at any other
place in $I$.  Both interpolants $I[x_a=s]$ and $I[x_b\neq s]$ are 
partial interpolants of the clause.  From $x_a\neq x_b$ we can derive
the weak interpolant $I[x_b\neq s]$ from the strong interpolant $I[x_a=s]$
using Lemma~\ref{lemma:monotonicity} (monotonicity).
We define
\begin{align*}
  EQ\S(x,s) &:= (x_a = s), &   EQ\W(x,s) &:= (x_b\neq s)
\end{align*}
and label a clause in the proof tree with $I[EQ(x,s)]$ to denote that
the formulae $I[EQ\S(x,s)]$ and $I[EQ\W(x,s)]$ are the strong and weak
partial interpolants.

For theory lemmas containing the literal $a\neq b$, the corresponding
auxiliary variables $x_a,x_b$ may appear anywhere in the partial
interpolant, even under a function symbol.  A simple example is the
theory conflict $s\neq f(a) \land a=(x_a = x_b =)b \land f(b)=s$,
which has the partial interpolants $s\neq f(x_a)$ and $s\neq f(x_b)$
(depending on whether $x_a=x_b$ is considered as $A$- or as
$B$-literal).  We simply label the corresponding theory lemma with the
interpolant $s\neq f(x)$.  In general the label of such a clause has
the form $I(x)$.  The formulae $I(x_a)$ and $I(x_b)$ are the strong
and weak partial interpolants of that clause. Of course, here the
interpolants are equivalent given $x_a=x_b$.
\fi

When two partial interpolants for clauses containing $a=b$ are
combined using~(\ref{rule:res}), i.\,e., the pivot literal is a
non-mixed literal but the mixed literal $a=b$ occurs in $C_1$ and
$C_2$, the resulting partial interpolant may contain $EQ(x,s_1)$ and
$EQ(x,s_2)$ for different shared terms $s_1, s_2$.  In general, we allow
the partial interpolants to have the form $I[EQ(x,s_1)]\dots[EQ(x,s_n)]$.

\subsection{Pivoting of Mixed Equalities}

\begin{tacas}
We require that every clause containing $a=b$ with auxiliary variables
$x_a,x_b$ is always labelled with a formula of the form
$I[EQ(x,s_1)]\dots[EQ(x,s_n)]$ and that this is a partial
interpolant of the clause for both $EQ\S$ and $EQ\W$.  As discussed
above, this is automatically the case for the theory lemmas computed
from conflicts in the congruence closure algorithm.  This property is
also preserved by (\ref{rule:res}) and this rule also preserves the
property of being a 
\ifnewinterpolation\else strong or weak \fi
partial interpolant.
\end{tacas}
\begin{techreport}
We require that every clause $C$ containing $a=b$ with auxiliary variables
\ifnewinterpolation $x,p_x$ \else $x_a,x_b$ \fi
is always labelled with a formula of the form
$I[EQ(x,s_1)]\dots[EQ(x,s_n)]$.  
\ifnewinterpolation\else
From this pattern we get the strong
resp.\ weak interpolant by substituting $EQ\S(x,s_i)$ resp.\ $EQ\W(x,s_i)$
($i\in\{1,\dots,n\}$) in $I$.  To show (s-w), assume $\lnot C\proj
A\lor \lnot C\proj B$.
Since $\lnot C$ contains $a \neq b$ we can derive $x_a\neq x_b$.
Then, $EQ\S(x,s_i) \rightarrow EQ\W(x,s_i)$ holds and by
monotonicity we get
\[
I[EQ\S(x,s_1)]\dots[EQ\S(x,s_n)] \rightarrow
I[EQ\W(x,s_1)]\dots[EQ\W(x,s_n)].\]
\fi
As discussed above, the partial interpolants computed for
conflicts in the congruence closure algorithm are of the form
$I[EQ(x,s_1)]\dots[EQ(x,s_n)]$. 
This property is also preserved by (\ref{rule:res}), and by Theorem~1 this
rule also preserves the property of being a 
\ifnewinterpolation\else strong or weak \fi
partial interpolant.
\end{techreport}
\ifnewinterpolation\else

\fi
\begin{tacas}
On the other hand, a clause containing the literal $a\neq b$ is
labelled with a formula of the form $I(x)$, i.\,e., the auxiliary
variable $x$ can occur at
arbitrary positions.  Both $I(x_a)$ and $I(x_b)$ are partial
interpolants of the clause.  Again, the form $I(x)$ and the property
of being a partial interpolant is also preserved by (\ref{rule:res}).
\end{tacas}
\begin{techreport}
On the other hand, a clause containing the literal $a\neq b$ is
labelled with a formula of the form $I(x)$, i.\,e., the auxiliary
variable $x$ can occur at 
arbitrary positions.  
\ifnewinterpolation\else
The strong resp.\ weak interpolants are derived
from this pattern as $I(x_a)$ resp.\ $I(x_b)$. 
To show (s-w), assume again $\lnot C\proj A\lor \lnot C\proj B$.
In this case $\lnot C$ contains $a = b$, so we can derive $x_a= x_b$.
Then, $I(x_a) \rightarrow I(x_b)$ holds.
\fi
Again, the form $I(x)$ and the property
of being a partial interpolant is also preserved by (\ref{rule:res}).
\end{techreport}

We use the following rule to interpolate the resolution step on the mixed
literal $a=b$.
%
\begin{equation}\label{rule:inteq}
\inferrule {C_1\lor a=b:I_1[EQ(x,s_1)]\dots[EQ(x,s_n)] \\ C_2 \lor a\neq b: I_2(x) } 
 {C_1 \lor C_2: I_1[I_2(s_1)]\dots[I_2(s_n)] }
 \tag{rule-eq}
\end{equation}
The rule replaces every literal $EQ(x,s_i)$ in $I_1$ with the formula
$I_2(s_i)$, in which every $x$ is substituted by $s_i$. Therefore,
the auxiliary variable introduced for the mixed literal $a=b$ is removed.

\begin{theorem}[Soundness of (\ref{rule:inteq})]
  Let $a=b$ be a mixed literal with auxiliary variable $x$.  If
  $I_1[EQ(x,s_1)]\dots[EQ(x,s_n)]$ 
  \ifnewinterpolation is a partial interpolant 
  \else yields two (strong and weak) partial interpolants \fi
  of $C_1 \lor a=b$ and $I_2(x)$
  \ifnewinterpolation a partial interpolant 
  \else two partial interpolants \fi
  of $C_2 \lor a\neq b$ then
  $I_1[I_2(s_1)]\dots[I_2(s_n)]$ 
  \ifnewinterpolation is a partial interpolant 
  \else yields two partial interpolants \fi
  of the clause $C_1 \vee C_2$.
\end{theorem}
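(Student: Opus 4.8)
The plan is to verify the three defining properties of a partial interpolant of the resolvent $C_1\lor C_2$: the symbol condition, inductivity ($A \land \lnot C_1\proj A \land \lnot C_2\proj A \models I_1[I_2(s_1)]\dots[I_2(s_n)]$), and contradiction ($B \land \lnot C_1\proj B \land \lnot C_2\proj B \land I_1[I_2(s_1)]\dots[I_2(s_n)] \models \bot$). I would start by writing out the partial-interpolant properties of the two premises with the concrete projections: for $C_1\lor a=b$ the negation contributes $(a\neq b)\proj A = (p_x \xor a=x)$ and $(a\neq b)\proj B=(\lnot p_x\xor x=b)$, while for $C_2\lor a\neq b$ it contributes $(a=b)\proj A=(a=x)$ and $(a=b)\proj B=(x=b)$. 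The symbol condition is then immediate: each $EQ(x,s_i)$ is the only place where $x$ and $p_x$ occur in $I_1[\cdot]$, and replacing it by $I_2(s_i)$, which contains neither variable, eliminates both auxiliary symbols and leaves only shared symbols.

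For inductivity the key move is to instantiate the auxiliary variables, which are fresh and hence do not occur in $A$ nor in the $A$-projections of $C_1,C_2$. From the inductivity of $I_1$ I universally generalise over $x,p_x$ and substitute $x:=a$, $p_x:=\bot$; this makes the antecedent $p_x\xor a=x$ true and collapses every $EQ(x,s_i)=p_x\xor x=s_i$ to the literal $a=s_i$, yielding $A \land \lnot C_1\proj A \models I_1[a=s_1]\dots[a=s_n]$. From the inductivity of $I_2$, generalising over $x$ and instantiating $x:=s_i$ gives $A \land \lnot C_2\proj A \models (a=s_i\rightarrow I_2(s_i))$ for each $i$. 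Since the occurrences of $a=s_i$ are positive in $I_1[\cdot]$, Monotonicity (Lemma~\ref{lemma:monotonicity}) lets me replace each $a=s_i$ by $I_2(s_i)$, which establishes inductivity.

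The contradiction direction is symmetric, and the dual instantiation is the crucial step. Under $B \land \lnot C_2\proj B$, the contradiction property of $I_2$, generalised over $x$ and instantiated at $x:=s_i$, reads $I_2(s_i)\rightarrow s_i\neq b$; since the $I_2(s_i)$ sit in the positive positions of $I_1[\cdot]$, Monotonicity turns the hypothesis $I_1[I_2(s_1)]\dots[I_2(s_n)]$ into $I_1[b\neq s_1]\dots[b\neq s_n]$. On the other side, the contradiction property of $I_1$, generalised and instantiated at $x:=b$, $p_x:=\top$, makes $\lnot p_x\xor x=b$ true and collapses every $EQ(x,s_i)$ to $b\neq s_i$, so that $B \land \lnot C_1\proj B \land I_1[b\neq s_1]\dots[b\neq s_n] \models \bot$. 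Chaining these two facts under $B \land \lnot C_1\proj B \land \lnot C_2\proj B$ closes the contradiction.

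I expect the main obstacle to be choosing the right instantiations of $x$ and $p_x$ and keeping the two directions coherent: inductivity needs the ``$A$-value'' $x:=a$, $p_x:=\bot$ so that $EQ$ degenerates to an equality, whereas contradiction needs the ``$B$-value'' $x:=b$, $p_x:=\top$ so that $EQ$ degenerates to a disequality, and in each case one must check that the projected pivot literal becomes trivially true so it can be discharged. Once these instantiations are fixed, the freshness of $x,p_x$ justifies the universal generalisations, and the rest is a routine application of Monotonicity together with the positivity of the $EQ(x,s_i)$ occurrences; no appeal to the deep substitution lemma is needed here, since $I_2$ enters only as a term substitution into single positive positions.
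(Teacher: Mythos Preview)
Your proposal is correct and follows essentially the same route as the paper: the same instantiations ($x:=a,\ p_x:=\bot$ for inductivity; $x:=b,\ p_x:=\top$ for contradiction) and the same appeal to Monotonicity at the positive $EQ$-positions. The only organisational difference is that the paper factors the argument through the auxiliary Lemma~\ref{lem:weakstrongip}, proving the abstract conditions (ind) and (cont) instead of the raw partial-interpolant properties; in particular, for (cont) the paper does a case split on whether $I_2(b)$ holds (a purely logical move, since (cont) does not mention $B$), whereas you obtain $I_2(s_i)\rightarrow s_i\neq b$ directly from the contradiction hypothesis of $I_2$ under $B\land\lnot C_2\proj B$. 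Both variants are valid and yield the same intermediate formula $I_1[b\neq s_1]\dots[b\neq s_n]$.
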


\begin{techreport}
\begin{proof}
The symbol condition for $I_1[I_2(s_1)]\dots[I_2(s_n)]$ clearly holds
if we assume that it holds for $I_1[EQ(x,s_1)]\dots[EQ(x,s_n)]$ and $I_2(x)$. 
\ifnewinterpolation\else
By construction of weak and strong interpolants (s-w) holds. 
\fi
Hence, after
we show (ind) and (cont), we can apply Lemma~\ref{lem:weakstrongip}.

\ifnewinterpolation
\subsubsection*{Inductivity.}
We assume
\begin{align*}
\forall x,p_x.~ &((p_x \xor a = x) \rightarrow I_1[p_x \xor x=s_1]\dots[p_x \xor x=s_n])\\
&\land (a = x \rightarrow I_2(x))
\end{align*}
and show $I_1[I_2(s_1)]\dots[I_2(s_n)]$.
Instantiating $x:= s_i$ for all $i\in\{1,\dots,n\}$
and taking the second conjunct gives
$\bigwedge_{i\in\{1,\dots,n\}} (a=s_i \rightarrow I_2(s_i))$.
Instantiating $p_x:=\bot$ and $x:=a$ and taking the first conjunct gives
$I_1[a=s_1]\dots[a=s_n]$.  
With monotonicity we get
$I_1[I_2(s_1)]\dots[I_2(s_n)]$ as desired.

\subsubsection*{Contradiction.}
We have to show
\begin{align*}
I_1[I_2(s_1)]&\ldots[I_2(s_n)] \rightarrow \\
&\exists x,p_x.~(((\lnot p_x \xor x=b) \land I_1[p_x \xor x = s_1]\dots[p_x \xor x = s_n])\\
&\hphantom{\exists x,p_x.~}\lor (x=b \land I_2(x)))
\end{align*}
We show the implication for $p_x:=\top$ and $x:=b$.  It simplifies to
\[ I_1[I_2(s_1)]\dots[I_2(s_n)] \rightarrow 
I_1[b \neq s_1]\dots[b \neq s_n] \lor I_2(b)
\]
If $I_2(b)$ holds the implication is true.
If $I_2(b)$ does not hold, we have 
\[\bigwedge_{i\in\{1,\dots,n\}} (I_2(s_i) \rightarrow b \neq s_i)\]
With monotonicity we get $I_1[I_2(s_1)]\dots[I_2(s_n)]
\rightarrow I_1[b\neq s_1]\dots[b\neq s_n]$.
\qed
\else
\subsubsection*{Inductivity.}
We have to show
\begin{align*}
\forall x_a,x_b.~ (a\neq b \proj A \rightarrow I_1\S[x_a=s_1]\dots[x_a=s_n])
&\land (a= b \proj A \rightarrow I_2\S(x_a))\\
& \rightarrow
I_1\S[I_2\S(s_1)]\dots[I_2\S(s_n)]
\end{align*}
Here $I_1\S$ and $I_2\S$ indicates that there may be other patterns unrelated
to the literal $a=b$ that are replaced in the strong interpolant.

Substituting $a=b\proj A \equiv a=x_a\land x_a=x_b$ and instantiating
$x_a=x_b=s_i$ for all $i\in\{1,\dots,n\}$ in the second conjunct gives
$\bigwedge_{i\in\{1,\dots,n\}} (a=s_i \rightarrow I_2\S(s_i))$
Substituting $a\neq b\proj A \equiv a=x_a\land x_a \neq x_b$ and instantiating $x_a=a$
and $x_b$ by some value $v$ different\footnote{We assume we always have at least
  two elements in the universe.} from $a$ in the first conjunct gives
$I_1\S[a=s_1]\dots[a=s_n]$.  
With monotonicity we get
$I_1\S[I_2\S(s_1)]\dots[I_2\S(s_n)]$ as desired.

\subsubsection*{Contradiction.}
We have to show
\begin{align*}
&I_1\W[I_2\W(s_1)]\dots[I_2\W(s_n)] \rightarrow \\
&\exists x_a,x_b.~((a\neq b \proj B \land I_1\W[x_b\neq s_1]\dots[x_b\neq s_n])
\lor (a= b \proj B \land I_2\W(x_b)))
\end{align*}

If $I_2\W(b)$ holds, instantiate $x_a$ and $x_b$ with $b$.  Then $I_2\W(x_b)$ 
and $a = b\proj B$ hold.  Hence the implication above holds
as desired.

Otherwise, instantiate $x_b$ with $b$ and $x_a$ with some value $v$ different
from $b$.  Then, $a\neq b \proj B$ holds. Since $I_2\W(x_b)$ does not
hold, we have 
\[\bigwedge_{i\in\{1,\dots,n\}} (I_2\W(s_i) \rightarrow x_b \neq s_i)\]
With monotonicity we get 
$I_1\W[x_b\neq s_1]\dots[x_b\neq s_n]$, so the first disjunct holds.
\qed
\fi
\end{proof}
\end{techreport}


\subsection{Example}

We demonstrate our algorithm on the following example:
\begin{align*}
  A\equiv&(\lnot q \lor a=s_1)\land(q \lor a=s_2)\land f(a)=t\\
  B\equiv&(\lnot q \lor b=s_1)\land(q \lor b=s_2)\land f(b)\neq t
\end{align*}
The conjunction $A\land B$ is unsatisfiable.  In this example, $a$ is
$A$-local, $b$ is $B$-local and the remaining symbols are shared.

Assume the theory solver for \euf introduces the mixed literal $a=b$ and
provides the lemmas (i) $f(a)\neq t\lor a\neq b\lor 
f(b)=t$, (ii) $a\neq s_1\lor b\neq s_1\lor a=b$, and (iii) $a\neq
s_2\lor b\neq s_2\lor a=b$.  
Let the variable $x$ be
associated with the equality $a=b$.  Then, we label the lemmas with (i)
$f(x)=t$, (ii) $EQ(x,s_1)$, and (iii) $EQ(x,s_2)$.

We compute an interpolant for $A$ and $B$ using Pudl\'ak's algorithm.  Since
the input is already in conjunctive normal form, we can directly apply resolution.
\begin{tacas}
  From lemma (ii) and the input clauses $\lnot q \lor a=s_1$ and $\lnot q \lor
  b=s_1$ we can derive the clause $\lnot q\lor a=b$.  The partial interpolant
  of the derived clause is still $EQ(x,s_1)$, since the partial interpolants
  of the input clauses are $\bot$ resp.\ $\top$.  Similarly, from 
  lemma (iii) and the input clauses $q \lor a=s_2$
  and $q \lor b=s_2$ we can derive the clause $q\lor a=b$ with partial
  interpolant $EQ(x,s_2)$.  A resolution step on these two clauses with $q$ as
  pivot yields the clause $a=b$. Since $q$ is a shared literal, Pudl\'ak's
  algorithm introduces 
  the case distinction.  Hence, we get the partial interpolant
  $(EQ(x,s_2)\lor q)\land(EQ(x,s_1)\lor \lnot q)$.  Note that this interpolant
  has the
  form $I_1[EQ(x,s_1)][EQ(x,s_2)]$ and, therefore, satisfies the syntactical
  restrictions required by our scheme.

  From the \euf-lemma (i) and the input clauses $f(a)=t$ and $f(b)\neq t$, we
  can derive the clause $a\neq b$ with partial interpolant $f(x)=t$.  Note that
  this interpolant has the form $I_2(x)$ which also corresponds to the syntactical
  restrictions needed for our method.

  If we apply the final resolution step on the mixed literal $a=b$ using
  (\ref{rule:inteq}), we get the
  interpolant $I_1[I_2(s_1)][I_2(s_2)]$ which corresponds to the interpolant
  $(f(s_2)=t\lor q)\land(f(s_1)=t\lor \lnot q)$.
\end{tacas}
\begin{techreport}
  Note that for Pudl\'ak's algorithm every input clause has the partial
  interpolant $\bot$ ($\top$) if it is part of $A$ ($B$).  In the following
  derivation trees we apply the following simplifications without explicitely
  stating them:
  \begin{align*}
    F\land\top&\equiv F\\
    F\lor\bot&\equiv F
  \end{align*}

  From lemma (ii) and the input clauses $\lnot q \lor a=s_1$ and $\lnot q \lor
  b=s_1$ we can derive the clause $\lnot q\lor a=b$.  The partial interpolant
  of the derived clause is still $EQ(x,s_1)$.
  \[
  \inferrule*{ \inferrule*{
    \lnot q \lor a=s_1 : \bot \\
    a\neq s_1\lor b\neq s_1\lor a=b : EQ(x,s_1)}
            {b\neq s_1\lor \lnot q \lor a=b : EQ(x,s_1)}\\
            b=s_1\lor\lnot q : \top}
            {\lnot q\lor a=b : EQ(x,s_1)}
  \]
  Similarly, from 
  lemma (iii) and the input clauses $q \lor a=s_2$
  and $q \lor b=s_2$ we can derive the clause $q\lor a=b$ with partial
  interpolant $EQ(x,s_2)$.
  \[
  \inferrule*{ \inferrule*{
    q \lor a=s_2 : \bot \\
    a\neq s_2\lor b\neq s_2\lor a=b : EQ(x,s_2)}
            {b\neq s_2\lor q \lor a=b : EQ(x,s_2)}\\
            b=s_2\lor q : \top}
            {q\lor a=b : EQ(x,s_2)}
  \]
  A resolution step on these two clauses with $q$ as
  pivot yields the clause $a=b$. Since $q$ is a shared literal, Pudl\'ak's
  algorithm introduces 
  the case distinction.  Hence, we get the partial interpolant
  $(EQ(x,s_2) \lor q)\land(EQ(x,s_1) \lor \lnot q)$.  Note that this interpolant
  has the
  form $I_1[EQ(x,s_1)][EQ(x,s_2)]$ and, therefore, satisfies the syntactical
  restrictions.
  \[
  \inferrule*
      { q\lor a=b : EQ(x,s_2) \\ \lnot q\lor a=b : EQ(x,s_1) }
      {a=b : (EQ(x,s_2) \lor q)\land(EQ(x,s_1) \lor \lnot q)}
  \]
  From the \euf-lemma (i) and the input clauses $f(a)=t$ and $f(b)\neq t$, we
  can derive the clause $a\neq b$ with partial interpolant $f(x)=t$.  Note that
  this interpolant has the form $I_2(x)$ which also corresponds to the syntactical
  restrictions needed for our method.
  \[
  \inferrule*
  {
  \inferrule*
  { f(a)=t : \bot \\ f(a)\neq t\lor a\neq b\lor f(b)=t : f(x)=t }
  { f(b)=t \lor a\neq b : f(x)=t } \\ f(b)\neq t : \top }
  { a\neq b : f(x)=t }
  \]
  If we apply the final resolution step on the mixed literal $a=b$ using
  (\ref{rule:inteq}), we get the
  interpolant $I_1[I_2(s_1)][I_2(s_2)]$ which corresponds to the interpolant
  $(f(s_2)=t\lor q)\land(f(s_1)=t \lor \lnot q)$.
  \[
  \inferrule*
  { a=b : (EQ(x,s_2) \lor q)\land(EQ(x,s_1) \lor \lnot q) \\ a\neq b : f(x)=t }
  { \bot : (f(s_2)=t \lor q)\land(f(s_1)=t \lor \lnot q) }
  \]
  When resolving on $q$ in the derivations above, the mixed literal $a=b$
  occurs in both antecedents.  This leads to the form
  $I[EQ(x,s_1)][EQ(x,s_2)]$.  We can prevent this by resolving in a different
  order.
  We could first resolve the clause $q\lor a=b$ with the clause $a\neq b$ and
  obtain the partial interpolant $f(s_2)=t$ using (\ref{rule:inteq}).
  \[
  \inferrule*{ a=b \lor q : EQ(x,s_2) \\ a\neq b : f(x) = t }
            { q : f(s_2) = t }
  \]
  Then we could resolve the clause $\lnot q\lor a=b$ with the clause $a\neq b$ and
  obtain the partial interpolant $f(s_1)=t$ again using (\ref{rule:inteq}).
  \[
  \inferrule*{ a=b \lor \lnot q : EQ(x,s_1) \\ a\neq b : f(x) = t }
    { \lnot q : f(s_1) = t }
  \]
  The final resolution step on $q$ will then introduce the case distinction
  according to Pudl\'ak's algorithm.  This results in the same interpolant.
  \[
  \inferrule*
  { q : f(s_2)=t \\
    \lnot q : f(s_1)=t }
  { \bot : (f(s_2)=t \lor q)\land(f(s_1)=t \lor \lnot q) }
  \]
\end{techreport}

\section{Linear Real and Integer Arithmetic}

Our solver for linear arithmetic is based on a variant of the Simplex
approach~\cite{Dutertre2006}.  A theory conflict is a conjunction of
literals $\ell_j$ of the form $\sum_i a_{ij} x_{i} \leq b_j$.  The
proof of unsatisfiability is given by Farkas coefficients $k_j\geq 0$
for each inequality $\ell_j$.  These coefficients have the properties
$\sum_j k_ja_{ij} = 0$ and $\sum_j k_jb_j < 0$.  In the following we
use the notation of adding inequalities (provided the coefficients are
positive).  Thus, we write $\sum_j k_j \ell_j$ for $\sum_i (\sum_j k_j
a_{ij}) x_i \leq \sum_j k_jb_j$. With the property of the Farkas
coefficients we get a contradiction ($0<0$) and this shows that the theory
conflict is unsatisfiable.

A conjunction of literals may have rational but no integer solutions.
In this case, there are no Farkas coefficients that can prove the
unsatisfiability.  So for the integer case, our solver may introduce
extended branches~\cite{Dillig2011}, which are just branches of the DPLL engine
on newly introduced literals.  In the proof tree this results in resolution
steps with these literals as pivots.

\begin{example}
  The formula $t \leq 2a \leq r \leq 2b+1 \leq t$ has no integer
  solution but a rational solution.  Introducing the branch $a\leq b
  \lor b < a$ leads to the theory conflicts $t\leq 2a \leq 2b \leq
  t-1$ and $r \leq 2b+1 \leq 2a-1 \leq r-1$ (note that $b<a$ is
  equivalent to $b+1 \leq a$).  The corresponding proof tree is given
  below.  The Farkas coefficients in the theory lemmas are given in
  parenthesis.  Note that the proof tree shows the clauses, i.\,e.,
  the negated conflicts.  A node with more than two parents denotes that
  multiple applications of the resolution rule are taken one after another.

  \centerline{\begin{tikzpicture}\footnotesize
    \node (t2) at(0,0) {$\begin{array}{r}\lnot(r\le 2b+1)\;(\cdot 1)\\\lnot
        (b+1\le a)\;(\cdot 2)\\\lnot(2a\le r)\;(\cdot 1)\end{array}$};
    \node (t1) at(6,0) {$\begin{array}{r}
                          \lnot(t\le 2a)\;(\cdot 1)\\
                          \lnot(a\le b)\;(\cdot 2)\\
			  \lnot(2b+1\le t)\;(\cdot 1)
			  \end{array}$};
    \node (i21) at(2.2,.4) {$r\leq 2b+1$};
    \node (i22) at(3.4,-.1) {$2a\leq r$};
    \node (i11) at(8.2,.4) {$t\leq 2a$};
    \node (i12) at(9.4,-.1) {$2b+1\leq t$};
    \node (r2) at(2.2,-1) {$a\le b$};
    \node (r1) at(8.2,-1) {$\lnot(a\le b)$};
    \node (b) at(5.2,-1.7) {$\bot$};
    \draw (t1)-- (r1);
    \draw (i11)-- (r1);
    \draw (i12)-- (r1);
    \draw (t2)-- (r2);
    \draw (i21)-- (r2);
    \draw (i22)-- (r2);
    \draw (r1)-- (b);
    \draw (r2)-- (b);
  \end{tikzpicture}}

Now consider the problem of deriving an interpolant between $A\equiv t\le
2a\le r$ and $B\equiv r\le 2b+1\le t$.  We can obtain an interpolant by
annotating the above resolution tree with partial interpolants.  
To compute a partial interpolant for the theory lemma
$\lnot(r\le 2b+1) \lor \lnot(b+1\le a) \lor \lnot(2a\le r)$,
we purify the \emph{negated} clause according to the definition in
Section~\ref{sec:purification}, which gives
\[ r\le 2b+1  \land  x_1 \le a \land -x_1 + b + 1 \le 0 \land 2a\le r. \]
Then, we sum up the $A$-part of the conflict (the second
and fourth literal) multiplied by their corresponding Farkas coefficients.
This yields the interpolant $2x_1 \le r$.
Similarly, the negation of the theory lemma 
$\lnot(t\le 2a) \lor \lnot(a\le b) \lor  \lnot(2b+1\le t)$ is purified to
\[t\le 2a \land x_2+a\le 0 \land -x_2\le b \land  2b+1\le t,\]
which yields the partial interpolant $2x_2+t \le 0$.  
Note, that we have to introduce different variables for each literal.
Intuitively, the variable $x_1$ stands for $a$ and $x_2$ for $-a$.  
Using Pudl\'ak's algorithm we can derive the same interpolants for 
the clause $a\leq b$ resp.\ $\lnot (a\leq b)$.

For the final resolution step, the two partial interpolants 
$2x_1 \le r$ and $2x_2+t \le 0$ are combined into
the final interpolant of the problem.
Summing up these inequalities with $x_1=-x_2$ we get $t\le r$.  
While this follows from $A$, it is not
inconsistent with $B$.  We need an additional argument that, given $r=t$,
$r$ has to be an even integer.  This also follows from the partial
interpolants when setting $x_1=-x_2$: $t\leq -2x_2 = 2x_1 \leq r$.
\ifnewinterpolation
The final interpolant computed by our algorithm is 
$t \leq 2\floorfrac{r}{2}$.
\else
The final interpolant computed by our algorithm is 
$t \leq r \land (t \geq r \rightarrow t \leq 2\lfloor r/2\rfloor)$.
\fi


In general, we can derive additional constraints on the variables
if the constraint resulting from summing up the two partial interpolants
holds very tightly. We know implicitly that $x_1=-x_2$ is an integer
value between $t/2$ and $r/2$.  If $t$ equals
$r$ or almost equals $r$ there are only a few possible values which we can
explicitly express using the division function as in the example above.
\ifnewinterpolation
We assume that the (partial) interpolant $F$ always has a certain
property.  There is some term $s$ and some constant $k$, such that
for $s > 0$ the interpolant is always false and for $s < -k$ the
interpolant is always true (in our case $s = t-r$ and $k = 0$). 
For a partial interpolant that still contains auxiliary variables $\vec x$,
we additionally require that $s$ contains them with a positive coefficient
and that $F$ is monotone on $\vec x$, i.\,e., $\vec x \geq \vec x'$ implies
$F(\vec x) \rightarrow F(\vec x')$.
\else
This leads to the general form $t-r\leq 0\land (t-r\geq
-k\rightarrow F)$.  In our example we have $k=0$ and $F$ specifies
that $r=t$ is even.
\fi
\end{example}






To mechanise the reasoning used in the example above, our resolution
rule for mixed inequality literals requires that the interpolant patterns
that label the clauses have a certain shape.  An auxiliary
variable of a mixed inequality literal may only occur in the 
interpolant pattern
if the negated literal appears in the clause.  Let $\vec x$ denote the
set of auxiliary variables that occur in the pattern.  We require 
that these
variables only occur inside a special sub-formula of the form $LA(s(\vec x), k,
F(\vec x))$.  The first parameter $s$ is a linear term over the
variables in $\vec x$ and arbitrary other terms not involving $\vec
x$.  The coefficients of the variables $\vec x$ in $s$ must all be
positive.  The second parameter $k\in \mathbb{Q}_\eps$ is a
constant value.  In the real case we only allow the values $0$ and
$-\eps$. In the integer case we allow $k\in\mathbb{Z}, k\geq -1$.  
To simplify the presentation, we sometimes write $-\eps$ for $-1$ in the integer case.
The third parameter $F(\vec x)$ is a formula that contains the variables
from $\vec x$ at arbitrary positions.
\ifnewinterpolation
We require that $F$ is monotone, i.\,e., $\vec x \geq \vec x'$ implies 
$F(\vec x) \rightarrow F(\vec x')$.  
Moreover, $F(\vec x)=\bot$ for $s(\vec x) > 0$ and
$F(\vec x)=\top$ for $s(\vec x) < -k$.  The sub-formula
$LA(s(\vec x), k, F(\vec x))$ stands for $F(\vec x)$ and it is only
used to remember what the values of $s$ and $k$ are.

The intuition behind the formula $LA(s(\vec x), k, F(\vec x))$ is that
$s(\vec x) \leq 0$ summarises the inequality chain that follows from the $A$-part of
the formula.  On this chain there may be some constraints on
intermediate values.  In the example above the $A$-part contains the
chain $t \leq 2a \leq r$, which is summarised to $s \leq 0$ (with $s=t-r$).
Furthermore the $A$-part implies that there is an even integer value between
$t$ and $r$.  If $s < -k$ (with $k=0$ in this case),  $t$ and $r$ are 
distinct, and there always is an even integer between them.
However, if $-k \leq s \leq 0$, the truth value of the interpolant depends on 
whether $t$ is even. 
\else
Again we have a strong and a weak
partial interpolant that are obtained by using different definitions
for $LA$.  These definitions are
\begin{align*}
LA\S\left(s(\vec x), k, F(\vec x)\right) &:\equiv
  \forall \vec x' \leq \vec x.~ LA^*(s(\vec x'),k,F(\vec x')) \\
LA\W \left(s(\vec x),k,F(\vec x)\right) &:\equiv
  \exists \vec x' \geq \vec x.~ LA^*(s(\vec x'),k,F(\vec x')) \\
\text{where}\quad
LA^*(s(\vec x'),k,F(\vec x')) &:\equiv
s(\vec x') \leq 0 \land 
  (s(\vec x') \geq -k \rightarrow F(\vec x'))
\end{align*}

The intuition behind the formula $LA^*(s(\vec x), k, F(\vec x))$ is that
$s(\vec x) \leq 0$ summarises the inequality chain that follows from the $A$-part of
the formula.  On this chain there may be some constraints on
intermediate values.  In the example above the $A$-part contains the
chain $t \leq 2a \leq r$, which is summarised to $t\leq r$.
Furthermore the $A$-part implies that there is an even integer value between
$t$ and $r$.  If $t$ and $r$ are distinct, this is no problem.
However, if $t \geq r$ we need that $t$ is even. Using the above
pattern we can choose $k=0$ and $F$ as the formula that states that
$t$ is even.  

To see that the strong interpolant $LA\S(s,k,F)$ implies the weak
interpolant $LA\W(s,k,F)$, instantiate $\vec x'$ with $\vec x$ in both 
formulae.
Having quantifiers in
the interpolant is no problem; once all mixed literals are resolved,
all auxiliary variables
are removed.  Then, the strong and weak interpolant are identical and have
no quantifiers.

\begin{techreport}
\begin{lemma} \label{lemma_las_implies_law}
  For all $s(\vec x),F(\vec x)$, the strong interpolant implies the weak one:
\[  LA\S(s(\vec x),k,F(\vec x)) \rightarrow LA\W(s(\vec x),k,F(\vec x))\]
\end{lemma}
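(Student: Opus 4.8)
The plan is to observe that both the universal bound in $LA\S$ and the existential bound in $LA\W$ are witnessed by the distinguished value $\vec x$ itself, so a single instantiation at $\vec x' = \vec x$ bridges the two formulae. Since the component-wise ordering $\leq$ on the auxiliary variables is reflexive, we have both $\vec x \leq \vec x$ and $\vec x \geq \vec x$, which is all the argument needs.

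First I would assume the antecedent $LA\S(s(\vec x),k,F(\vec x))$, which by definition unfolds to $\forall \vec x' \leq \vec x.\ LA^*(s(\vec x'),k,F(\vec x'))$. Instantiating the universally quantified $\vec x'$ with $\vec x$, which is permitted because $\vec x \leq \vec x$, immediately yields $LA^*(s(\vec x),k,F(\vec x))$. Then I would establish the consequent $LA\W(s(\vec x),k,F(\vec x))$, i.e.\ $\exists \vec x' \geq \vec x.\ LA^*(s(\vec x'),k,F(\vec x'))$, by supplying $\vec x$ as the witness; this choice is legal because $\vec x \geq \vec x$, and the body $LA^*(s(\vec x),k,F(\vec x))$ is exactly the fact derived in the previous step. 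This closes the implication.

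There is essentially no obstacle here: the statement holds by pure quantifier manipulation and never inspects the internal structure of $LA^*$, the sign or domain of $k$, or the monotonicity of $F$. The only point that must be checked is that $\vec x$ is a permissible instance for the universal bound and a permissible witness for the existential bound, and both are immediate from reflexivity of $\leq$. In particular I would not expect the integer-versus-rational distinction in the allowed values of $k$ to play any role in this lemma, although it will matter for the subsequent soundness argument of the arithmetic resolution rule.
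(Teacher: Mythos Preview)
Your proof is correct and matches the paper's own argument exactly: instantiate $\vec x'$ with $\vec x$ in both $LA\S$ and $LA\W$, using reflexivity of $\leq$. Your additional remarks about not needing the structure of $LA^*$ or the integer/rational distinction are accurate but not required for the proof itself.
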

\begin{proof}
  Instantiate the vector $\vec x'$ in $LA\S$ and $LA\W$ with $\vec x$.\qed
\end{proof}
\end{techreport}
\fi

In the remainder of the section, we will give the interpolants for the
leaves produced by the linear arithmetic solver and for the resolvent of the
resolution step where the pivot is a mixed linear inequality.

\subsection{Leaf Interpolation}

As mentioned above, our solver produces for a clause $C\equiv\lnot
\ell_1\lor \dots \lor\lnot \ell_m$ some Farkas coefficients
$k_1,\dots,k_m\geq 0$ such that $\sum_j k_j \ell_j$ yields a
contradiction $0 < 0$.  A partial interpolant for a theory lemma can be
computed by summing up the $A$-part of the conflict: $I$ is defined as
$\sum_j k_j (\ell_j \proj A)$~(if $\ell_j\proj A = \top$ we regard it
as $0\leq 0$, i.\,e., it is not added to the sum).  It is a valid interpolant as it
clearly follows from $\lnot C\proj A \iff \ell_1\proj A \land \dots
\land \ell_m\proj A$.  Moreover, we have that $I + \sum_j
k_j(\ell_j\proj B)$ yields $0< 0$, since for every literal, even
for mixed literals, $\ell_j\proj A + \ell_j\proj B = \ell_j$
holds\footnote{Strictly speaking this does not hold for shared literals, where
  $\ell\proj A = \ell\proj B = \ell$.  In that case use $k_j=0$ in
$I+\sum_j k_j(\ell_j\proj B)$ to see that $I$ is indeed a partial interpolant.}.  This
shows that $I\land \lnot C\proj B$ is unsatisfiable.

The linear constraint $\sum_j k_j (\ell_j \proj A)$ can be expressed
as $s(\vec x) \leq 0$.  Thus, we can equivalently write this
interpolant in our pattern as $LA(s(\vec x), -\eps,
\ifnewinterpolation s(\vec x) \leq 0 \else \bot\fi)$.  Since the
Farkas coefficients are all positive and the auxiliary variables
introduced to define $\ell \proj A$ for mixed literals contain $x$
positively, the resulting term $s(\vec x)$ will also always contain
$x$ with a positive coefficient.

\paragraph{Theory combination lemmas.}

As mentioned in the preliminaries, we use theory combination clauses
to propagate equalities from and to the Simplex core of the linear
arithmetic solver.  These clauses must also be labelled with partial interpolants.
\begin{techreport}
In the following we give interpolants for those theory combination lemmas.
We will start with the case where no mixed literals occur, and treat lemmas
containing mixed literals afterwards.
\end{techreport}
\begin{tacas}The interesting case is when these clauses contain mixed literals.
Table~\ref{tbl:iplthcomb} shows the corresponding partial interpolants.
The non-mixed case is given in the technical report.
    
The interpolant for the clause $a=b\lor a< b \lor a> b$ deserves more
explanation.  This clause is used to propagate equalities from the
linear arithmetic solver if it can derive $a\leq b$ and $b\leq a$.  In the
interpolant, $x_1$ is the variable with $b \leq x_1 \leq a$, and $x_2$
the variable with $a \leq -x_2 \leq b$.  The formula
$LA(x_1+x_2,0,EQ(x,x_1))$ basically states that $x_1 \leq -x_2$ and
that if $x_1\geq -x_2$ then $x_1$ equals the shared value $x$ of the
equality $a=b$.  We stress that the interpolant has the required form:
$x_1$ and $x_2$ only occur inside an $LA$ and with the correct
coefficients in $x_1+x_2$ while $x$ only occurs as first parameter of an
$EQ$ term, which appears positively in the negation normal form (by
the definition of $LA\S$ and $LA\W$).

\begin{table}[t]
  \begin{minipage}{.5\textwidth}
  Clause $C$: $a \neq b \lor a\leq b$\\
  $\lnot C\proj A$: $\ifnewinterpolation a=x \else a=x_a\land x_a=x_b\fi
                     \land -a+x_1\leq 0$\\
  $\lnot C\proj B$: $\ifnewinterpolation x=b \else x_a=x_b\land x_b=b\fi
                     \land -x_1+b< 0$\\
  Interpolant $I$: $LA(-x+x_1, -\eps, 
                    \ifnewinterpolation x_1 \leq x\else \bot\fi)$
  \end{minipage}%
  \begin{minipage}{.5\textwidth}
  Clause $C$: $a \neq b \lor a\geq b$\\
  $\lnot C\proj A$: $\ifnewinterpolation a=x \else a=x_a\land x_a=x_b\fi
                     \land a+x_2 \leq 0$\\
  $\lnot C\proj B$: $\ifnewinterpolation x=b \else x_a=x_b\land x_b=b\fi
                     \land -x_2-b< 0$\\
  Interpolant $I$: $LA(x+x_2, -\eps, 
                    \ifnewinterpolation x \leq -x_2\else \bot\fi)$
  \end{minipage}\\[6pt]
  \centerline{\begin{minipage}{.8\textwidth}
  Clause $C$: $a = b \lor a < b \lor a > b$\\
  $\lnot C\proj A$: $\ifnewinterpolation p_x \xor a=x 
                     \else a=x_a\land x_a \neq x_b\fi
                     \land -a+x_1\leq 0 \land a+x_2\leq 0$\\
  $\lnot C\proj B$: $\ifnewinterpolation \lnot p_x \xor x=b 
                     \else x_a \neq x_b\land x_b = b\fi
                    \land -x_1+b\leq 0 \land -x_2 - b\leq 0$\\
  Interpolant $I$: $LA(x_1+ x_2, 0, EQ(x, x_1))$
  \end{minipage}}\\

  \caption{Interpolation of mixed theory combination clauses. We
    assume $a$ is $A$-local, $b$ is $B$-local, $a-b\leq 0$ has the
    auxiliary variable $x_1$, $b-a\leq 0$ has the auxiliary variable
    $x_2$ and $a=b$ the auxiliary variables $x_a$ and $x_b$. \label{tbl:iplthcomb}}
\end{table}
\end{tacas}

\begin{techreport}
\paragraph*{Interpolation of Non-Mixed Theory Combination Lemmas.}

If a theory combination lemma $t=u \lor t< u \lor t> u$ or $t \neq u \lor
t\leq u$ contains no mixed literal, we can compute partial interpolants as
follows.  If all literals in the clause are $A$-local, the formula $\bot$ is a
partial interpolant.  If all literals are $B$-local, the formula $\top$ is a
partial interpolant.  These are the same interpolants Pudl\'ak's algorithm
would give for input clauses from $A$ resp.\ $B$.

Otherwise, one of the literals belongs to $A$ and one to $B$.  The
symbols $t$ and $u$ have to be shared between $A$ and $B$ since they
appear in all literals.  We can derive a partial interpolant by
conjoining the negated literals projected to the $A$ partition.
\begin{align*}
  I &\equiv (t\neq u)\proj A \land (t\geq u)\proj A \land (t \leq u) \proj A. 
  &&\quad \mbox{for }t=u \lor t< u \lor t> u\\
  I &\equiv (t = u)\proj A \land (t > u) \proj A
  &&\quad \mbox{for }t\neq u \lor t\leq u
\end{align*}

Since we defined $I$ as $\lnot C \proj A$, the first property of the partial
interpolant holds trivially.  Also $I \land \lnot C \proj B$ is equivalent to
$\lnot C$ and therefore false.  The symbol condition is satisfied as $t$ and
$u$ are shared symbols.
\bigskip

\paragraph*{Interpolation of AB-Mixed Theory Combination Lemmas.}

\begin{table}[t]
  \begin{varwidth}{.5\textwidth}
  Clause $C$: $a \neq b \lor a\leq b$\\
  $\lnot C\proj A$: $\ifnewinterpolation a=x \else a=x_a\land x_a=x_b\fi
                     \land -a+x_1\leq 0$\\
  $\lnot C\proj B$: $\ifnewinterpolation x=b \else x_a=x_b\land x_b=b\fi
                     \land -x_1+b< 0$\\
  Interpolant $I$: $LA(-x+x_1, -\eps, 
                    \ifnewinterpolation x_1 \leq x\else \bot\fi)$
  \end{varwidth}\hfill
  \begin{varwidth}{.5\textwidth}
  Clause $C$: $a \neq b \lor a\geq b$\\
  $\lnot C\proj A$: $\ifnewinterpolation a=x \else a=x_a\land x_a=x_b\fi
                     \land a+x_2 \leq 0$\\
  $\lnot C\proj B$: $\ifnewinterpolation x=b \else x_a=x_b\land x_b=b\fi
                     \land -x_2-b< 0$\\
  Interpolant $I$: $LA(x+x_2, -\eps, 
                    \ifnewinterpolation x \leq -x_2\else \bot\fi)$
  \end{varwidth}\\[6pt]
  \centerline{\begin{varwidth}{.8\textwidth}
  Clause $C$: $a = b \lor a < b \lor a > b$\\
  $\lnot C\proj A$: $\ifnewinterpolation (p_x \xor a=x)
                     \else a=x_a\land x_a \neq x_b\fi
                     \land -a+x_1\leq 0 \land a+x_2\leq 0$\\
  $\lnot C\proj B$: $\ifnewinterpolation (\lnot p_x \xor x=b)
                     \else x_a \neq x_b\land x_b = b\fi
                    \land -x_1+b\leq 0 \land -x_2 - b\leq 0$\\
  Interpolant $I$: $LA(x_1+ x_2, 0, 
  \ifnewinterpolation x_1\leq -x_2 \land (x_1\geq -x_2 \rightarrow EQ(x, x_1))
  \else EQ(x, x_1)\fi)$
  \end{varwidth}}\\
  \caption{Interpolation of mixed theory combination clauses. We
    assume $a$ is $A$-local, $b$ is $B$-local, $a-b\leq 0$ has the
    auxiliary variable $x_1$, $b-a\leq 0$ has the auxiliary variable
    $x_2$ and $a=b$ the auxiliary variables 
    \ifnewinterpolation $x$ and $p_x$\else $x_a$ and $x_b$\fi.
    \label{tbl:iplthcomb}}
\end{table}

If we are in the mixed case, all three literals are mixed. One of the two
terms must be $A$-local (in the following we denote this term by $a$) the
other term $B$-local (which we denote by $b$). To purify the literals, we introduce a fresh auxiliary
variable for each literal. Table~\ref{tbl:iplthcomb} depicts all possible mixed
theory lemmas together with the projections $\lnot C \proj A$ and $\lnot C
\proj B$ and a partial interpolant of the clause.

\begin{lemma}
The interpolants shown in Table~\ref{tbl:iplthcomb} are correct
partial interpolants of their respective clauses.
\end{lemma}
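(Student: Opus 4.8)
The plan is to verify, for each of the three clauses in Table~\ref{tbl:iplthcomb} separately, the three defining properties of a partial interpolant: inductivity $\lnot C\proj A\models I$, contradiction $\lnot C\proj B\land I\models\bot$, and the symbol condition. Proving these stronger statements (dropping the ambient $A$ and $B$) suffices, since neither $A$ nor $B$ mentions the auxiliary variables and $\lnot C\proj A\land\lnot C\proj B\Leftrightarrow\lnot C$ is already unsatisfiable. The symbol condition is immediate in every case: the only symbols occurring in an interpolant are the auxiliary variables $x_1,x_2,x,p_x$, each of which is shared and belongs to a literal actually present in the corresponding clause $C$, so no forbidden symbol is introduced. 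Since $LA(s,k,F)$ abbreviates $F$, I first unfold each interpolant to its underlying formula and then argue purely by chaining inequalities.

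For the two disequality clauses the argument is a direct Farkas-style chaining. For $a\neq b\lor a\leq b$ the interpolant unfolds to $x_1\leq x$; here $\lnot C\proj A$ provides $a=x$ and $x_1\leq a$, which combine to $x_1\leq x$ (inductivity), while $\lnot C\proj B$ provides $x=b$ and $b<x_1$, so that together with the interpolant we obtain $x<x_1\leq x$, a contradiction. The clause $a\neq b\lor a\geq b$ is symmetric with interpolant $x\leq -x_2$, using $a\leq -x_2$ on the $A$-side and $-x_2<x$ on the $B$-side. In both cases the underlying $F$ is exactly $s\leq 0$, so the interpolant is a well-formed $LA$-pattern.

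The real work is the equality-propagation clause $a=b\lor a<b\lor a>b$, whose interpolant unfolds to $x_1\leq -x_2\land(x_1\geq -x_2\rightarrow EQ(x,x_1))$ with $EQ(x,x_1)$, i.e.\ $p_x\xor(x=x_1)$. For inductivity, $\lnot C\proj A$ yields the chain $x_1\leq a\leq -x_2$, which gives the conjunct $x_1\leq -x_2$; and whenever additionally $x_1\geq -x_2$, the chain collapses to $x_1=a=-x_2$, so $a=x_1$ rewrites the purified literal $p_x\xor(a=x)$ (coming from $a\neq b$) into precisely $EQ(x,x_1)$. For contradiction, $\lnot C\proj B$ yields $-x_2\leq b\leq x_1$, hence $x_1\geq -x_2$; combined with the interpolant's conjunct $x_1\leq -x_2$ this pins down $b=x_1=-x_2$, so the guard fires and the interpolant forces $p_x\xor(x=x_1)$, while $\lnot C\proj B$ contributes $\lnot p_x\xor(x=b)$, i.e.\ $\lnot p_x\xor(x=x_1)$; since $p_x\xor Q$ and $\lnot p_x\xor Q$ are complementary for any $Q$, this is the desired contradiction.

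I expect this third case to be the main obstacle, because it is the only one where the equality information, and not merely the inequalities, must be transmitted. As long as the two chains leave slack the interpolant should remain true, which is handled by the conjunct $x_1\leq -x_2$ together with the vacuously-true guarded implication; but exactly at the tight boundary $x_1=-x_2$ it must communicate the agreement of the $A$-value $a$ and the $B$-value $b$ through the shared variable $x$, which is what the guarded $EQ$ literal and the $p_x$-controlled xor encoding of the disequality $a\neq b$ accomplish. The remaining care is bookkeeping: tracking how $a\neq b$ in $\lnot C$ is purified via $p_x$, and confirming that the chosen $F$ is a legitimate $LA$-pattern, namely monotone in $\vec x$ with $F=\bot$ for $x_1+x_2>0$ and $F=\top$ for $x_1+x_2<0$ (the only nontrivial point being the boundary $x_1=-x_2$, from which lowering either variable moves into the region where $F=\top$).
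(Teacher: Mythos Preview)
Your proposal is correct and follows essentially the same approach as the paper: verify that each interpolant conforms to the required $LA$/$EQ$ pattern (positive coefficients, monotonicity, and the $F=\bot$/$F=\top$ boundary behaviour), then establish $\lnot C\proj A\models I$ and $\lnot C\proj B\land I\models\bot$ by chaining the purified (in)equalities, with the third clause requiring the collapse $x_1=a=-x_2$ (resp.\ $x_1=b=-x_2$) to transport the $p_x$-guarded equality information through $EQ(x,x_1)$. The only point you leave slightly implicit is the monotonicity of $F$ in the third case at the boundary: when both $(x_1,x_2)$ and $(x_1',x_2')$ satisfy $x_1+x_2=0=x_1'+x_2'$ with $x_1\geq x_1'$, $x_2\geq x_2'$, one must observe that necessarily $x_1=x_1'$, so $EQ(x,x_1)$ carries over unchanged; the paper spells this out, and you should too.
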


\begin{proof}
  First, we convince ourselves that these interpolants are of the right form:  The
  variables $x_1$ and $x_2$ appear in the first parameter of $LA$ with positive
  coefficients.  For the first two clauses that contain the literal $a\neq b$,
  the interpolant is allowed to contain $x$ at arbitrary positions.  
  \ifnewinterpolation Note that in the first interpolant
  $x_1\leq x$ is false for $-x+x_1 > 0$ and true for $-x+x_1 <\eps$, 
  i.\,e., $-x+x_1 \leq 0$.  Also, $x_1 \geq x_1'$ implies 
  $x_1\leq x \rightarrow x_1' \leq x$.  Similarly, for the second interpolant.

  In the
  third clause,
  $F(x_1,x_2) = x_1\leq -x_2 \land (x_1\geq -x_2 \rightarrow EQ(x, x_1))$ 
  is false for $x_1 + x_2 > 0$ (because of the first conjunct) and true 
  for $x_1 + x_2 < 0$ (because the implication holds vacuously).
  Also, $x_1 \geq x_1'$ and $x_2 \geq x_2'$ implies 
  $F(x_1,x_2) \rightarrow F(x_1', x_2')$.  To see this, note that
  $F(x_1,x_2)$ is false if $x_1' \geq -x_2'$ and $x_1' \neq x_1$.
  The variable $x$ appears only in an $EQ$-term which occurs
  positively in the partial interpolant. 
  \else  
  In the
  third clause the variable $x$ appears only in an $EQ$-term which occurs
  positively in the expanded form of the partial interpolant. 
  \fi

  Next we show
  \begin{align*}
    &\lnot C \proj A \models I\S \tag{Inductivity}\\
    &\lnot C \proj B \land I\W \models \bot \tag{Contradiction}
  \end{align*}

  \subsubsection*{Inductivity.}  
  \ifnewinterpolation
  For the clause $a\neq b\lor a\leq b$, the interpolant follows 
  from $\lnot C\proj A$, as $a=x$ and $-a + x_1 \leq 0$
  imply $x_1\leq x$. Similarly for the clause $a\neq b\lor a\geq b$,
  $\lnot C\proj A$ contains
  $a=x$ and $a + x_2 \leq 0$, which implies $x \leq -x_2$.
  \else
  For the clauses $a\neq b\lor a\leq b$ and $a\neq b\lor a\geq b$ the argument
  is symmetric.  We show only the case $C \equiv a\neq b \lor a\leq b$. 
  Assume $\lnot C\proj A$ and show $LA\S(-x_a+x_1,-\eps,\bot)$:
  \[
  \forall x_1' \leq x_1.~ -x_a+x_1' \leq 0 \land 
  (-x_a+x_1' \geq \eps \rightarrow \bot)
  \]
  Let $x_1'\leq x_1$. From $\lnot C \proj A$ we have $-a+x_1\leq 0$ and
  $a=x_a$. Hence, $-x_a + x_1' \leq 0$ as desired, which also shows that the
  implication in the second conjunct holds vacuously.

  \fi
  Now consider the clause $a= b\lor a< b \lor b< a$.
  \ifnewinterpolation
  Here, $\lnot C\proj A$ implies $x_1\leq -x_2$ and 
  that if $x_1\geq -x_2$ holds, then $x_1 = a =-x_2$. 
  Hence, $x_1\leq -x_2 \land x_1 \geq -x_2 \rightarrow EQ(x, x_1)$ holds.
  \else
  We assume $\lnot C\proj A$ and show $LA\S(x_1+x_2,0,x_a=x_1)$:
  \[\forall x_1' \leq x_1, x_2' \leq x_2.~ 
  x_1'+x_2' \leq 0 \land (x_1'+x_2' \geq 0 \rightarrow x_a = x_1').\]
  Let $x_1'\leq x_1$ and $x_2'\leq x_2$. 
  From $\lnot C\proj A$ we have $x_1 \leq a$ and $a+x_2\leq 0$.
  Thus, $x_1'+x_2' \leq x_1+x_2 \leq a + (-a) = 0$.
  Moreover, if $x_1'+x_2' \geq 0$ then
  \[ a\leq -x_2 \leq -x_2' \leq x_1' \leq x_1 \leq a,\]
  hence $a=x_1$.  With $a=x_a$ (also a part of $\lnot C \proj A$),
  this yields $x_a=x_1$.
  This shows that $LA\S(x_1+x_2,0,x_a=x_1)$ holds.
  \fi
  
  \subsubsection*{Contradiction.}
  Again we only show the first and third case. For the clause
  $C \equiv a\neq b \lor a\leq b$,
  \ifnewinterpolation
  note that $\lnot C\proj B$ and $LA\W(-x+x_1,-\eps,x_1 \leq x)$ 
  give the contradiction $x_1 > b = x > x_1$.
  For the clause $C \equiv a= b\lor a< b \lor b< a$,
  $\lnot C\proj B$ implies $x_1 \geq b \geq -x_2$. 
  With $x_1\leq -x_2$ from the interpolant this gives $x_1 = b$. Also,
  $x_1\geq -x_2\rightarrow EQ(x, x_1)$ from the interpolant gives
  $p_x \xor x=b$.  This is in contradiction with 
  $\lnot p_x \xor x=b$ from $\lnot C\proj B$.
  
  \else
  assume $\lnot C\proj B$ and $LA\W(-x_b+x_1,-\eps,\bot)$ hold:
  \[
  \exists x_1' \geq x_1.~ -x_b+x_1' \leq 0 \land (-x_b+x_1' \geq \eps \rightarrow \bot)
  \]
  Choose $x_1'$ for which the above formula holds. 
  From $\lnot C\proj B$ we have $-x_1+b < 0$ and
  $x_b=b$. Hence, $-x_b + x_1' \geq -b + x_1 > 0$.  This contradicts $-x_b+x_1'
  \leq 0$.
 
  Now consider the clause $C \equiv a= b\lor a< b \lor b< a$.
  We assume $\lnot C\proj B$ and $LA\W(x_1+x_2,0,x_b\neq x_1)$
  \[\exists x_1' \geq x_1, x_2' \geq x_2.~ 
  x_1'+x_2' \leq 0 \land (x_1'+x_2' \geq 0 \rightarrow x_b \neq x_1')\]
  and show a contradiction.  
  Choose $x_1'$ and $x_2'$ such that the formula holds.
  From $\lnot C\proj B$ we have $b\leq x_1$ and $-x_2\leq b$.  Thus
  \[0 \leq b - b \leq x_1+x_2 \leq x_1' + x_2' \leq 0,\]
  which gives $x_1'+x_2'=0$.  Also
  $b\leq x_1 \leq x_1' = -x_2' \leq -x_2 \leq b$, hence $x_b=b$. This
  contradicts $x_1'+x_2'\geq 0 \rightarrow x_b\neq b$.
  \fi
  \qed
\end{proof}

\end{techreport}

\subsection{Pivoting of Mixed Literals}

In this section we give the resolution rule for a step involving a
mixed inequality $a+b\leq c$ as pivot element.  In the following we
denote the auxiliary variable of the negated literal $\lnot(a+b\leq c)$
with $x_1$ and the
auxiliary variable of $a+b\leq c$ with $x_2$.  The intuition
here is that $x_1$ and $-x_2$ correspond to the same value between 
$a$ and $c-b$. The resolution rule for pivot element $a+b\leq c$ is
as follows where the values for $s_3$, $k_3$ and $F_3$ are given later.
\begin{equation} \label{rule:intla}\tag{rule-la}
\inferrule 
{C_1 \lor a+b\leq c : I_1[LA{(c_1x_1 + s_1(\vec x), k_1, F_1(x_1,\vec x))}] \\
C_2 \lor \lnot(a+b\leq c): I_2[LA{(c_2x_2 + s_2(\vec x), k_2, F_2(x_2, \vec x))}]} 
{C_1 \vee C_2: I_1[I_2[LA(s_3(\vec x), k_3, F_3(\vec x))]] } 
\end{equation}

\ifnewinterpolation The basic idea is to find for
 $\exists x_1. F_1(x_1, \vec x)\land F_2(-x_1, \vec x)$ an
equivalent quantifier free formula $F_3(\vec x)$.
To achieve this we note that we only have
to look on the value of $F_1$ for 
$-k_1 \leq c_1 x_1 + s_1(\vec x) \leq 0$,
since outside of this interval $F_1$ is guaranteed to be true
resp.\ false.  The formula $F_3$ must also be monotone and
satisfy the range condition.  We choose
\[ s_3(\vec x) = c_2 s_1(\vec x) + c_1s_2(\vec x), \]
and then $F_3$ will be false for $s_3(\vec x) > 0$, since either 
$F_1(x_1, \vec x)$ or $F_2(-x_1, \vec x)$ is false.
\else
The formula $LA(s_3(\vec x), k_3, F_3(\vec x))$ should
hold if and only if there is some $x_1=-x_2$ such that 
$LA(c_1x_1 + s_1(\vec x), k_1, F_1(x_1,\vec x))$ and 
$LA(c_2x_2 + s_2(\vec x), k_2, F_2(x_2,\vec x))$ hold.
From $c_1x_1 + s_1(\vec x) \leq 0$ and $c_2x_2+s_2(\vec x)\leq 0$ and $x_1=-x_2$ we get $c_2 s_1(\vec x) + c_1s_2(\vec x)\leq 0$, hence we choose
\[ s_3(\vec x) = c_2 s_1(\vec x) + c_1s_2(\vec x). \]
\fi
\ifnewinterpolation
The value of $k_3$ must be chosen such that $s_3(\vec x)<-k_3$ guarantees
the existence of a value $x_1$ with $c_1x_1 + s_1(\vec x)< -k_1$ and $-c_2x_1 + s_2(\vec x)< -k_2$.  Hence, in the integer case, the gap between
$\frac{s_2(\vec x)+k_2}{c_2}$ and $\frac{-s_1(\vec x)-k_1}{c_1}$
should be bigger than one.  Then,
$c_1c_2 < c_2(-s_1(\vec x)-k_1) - c_1(s_2(\vec x)+k_2)$.
So if we define
\[ k_3 = c_2k_1+c_1k_2+c_1c_2,\]
then there is a suitable $x_1$ for $s_3(\vec x) < - k_3$.
For $F_3$ we can then use a finite case distinction over all values where the truth value of $F_1$ is not determined.  This suggests defining
\begin{equation}
  \begin{array}{rl}
    F_3(\vec x) & :\equiv 
    \displaystyle{\bigvee_{i=0}^{\ceilfrac{k_1+1}{c_1}}
    F_1\left(\floorfrac{-s_1(\vec x)}{c_1} - i, \vec x\right)
    \land F_2\left(i-\floorfrac{-s_1(\vec x)}{c_1}, \vec x\right)}\\
  \end{array}
  \tag{int case}
\end{equation}
\else
For the inverse direction we need to guarantee the existence of 
$x_1=-x_2$ between $\frac{s_2(\vec x)}{c_2}$ and $\frac{-s_1(\vec x)}{c_1}$ 
such that  the following formulae hold\footnote{Unfortunately, the version published in TACAS 2013~\cite{tacaspaper} had this definition wrong.}:
\begin{align*}
LA_1^*(x_1) :\equiv s_1(\vec x) + c_1 x_1 \leq 0 \land (s_1(\vec x) + c_1 x_1 \geq -k_1 \rightarrow F_1(x_1,\vec x)),\\
LA_2^*(x_2) :\equiv s_2(\vec x) + c_2 x_2 \leq 0 \land (s_2(\vec x) + c_2 x_2 \geq -k_2 \rightarrow F_2(x_2,\vec x)).
\end{align*}

The first conjuncts of $LA_1^*$ and $LA_2^*$ yield
$\frac{s_2(\vec x)}{c_2}\leq -x_2$, $x_1 \leq \frac{-s_1(\vec x)}{c_1}$.  Hence
$x_1=-x_2$ should be a value between these bounds.  The second conjunct holds
vacuously if there is an integer value with
$\frac{s_2(\vec x)+k_2}{c_2}< -x_2 = x_1 < \frac{-s_1(\vec x)-k_1}{c_1}$. 
This holds if the gap is bigger than one:
$c_2s_1(\vec x) + c_1s_2(\vec x) < -c_2k_1 - c_1k_2-c_1c_2$.
Otherwise there are only finitely many candidates for $x_1=-x_2$ between 
$\frac{s_2(\vec x)}{c_2}$ and $\frac{-s_1(\vec x)}{c_1}$.  For these
we can do a finite case distinction in $F_3$.  This suggests the 
definitions
\begin{equation}
  \begin{array}{rl}
    k_3 &{}:= c_2k_1+c_1k_2+c_1c_2\\
    F_3(\vec x) & :\equiv 
    \displaystyle{\bigvee_{i=0}^{\ceilfrac{k_1+1}{c_1}}
    LA_1^*\left(\floorfrac{-s_1(\vec x)}{c_1} - i\right)
    \land LA_2^*\left(i-\floorfrac{-s_1(\vec x)}{c_1}\right)}\\
  \end{array}
  \tag{int case}
\end{equation}
\fi
\ifnewinterpolation 
In the real case, if $k_1 = -\eps$, the best choice is $x_1 = \frac{-s_1(\vec x)}{c_1}$, for which $F_1(x_1)$ is guaranteed to be true.  If $k_1 = 0$, we need to consider two cases:
\else
In the real case, we require that $k_1$ and $k_2$ are either $-\eps$ or $0$.
Then, the only candidate for $x_1$ is $\frac{-s_1(\vec x)}{c_1}$.  We define
\fi
\begin{equation}
  \ifnewinterpolation 
  \begin{array}{rl}
    k_3 &{}:= \begin{cases}k_2  & \text{if }k_1=-\eps\\
                           0 & \text{if }k_1=0
             \end{cases}\\
    F_3(\vec x) &{}:= \begin{cases}
       F_2\left(\frac{s_1(\vec x)}{c_1}, \vec x\right) & \text{if }k_1=-\eps\\
       s_3(\vec x) < 0 \lor 
       \left(F_1\left(-\frac{s_1(\vec x)}{c_1}, \vec x\right) \land
       F_2\left(\frac{s_1(\vec x)}{c_1}, \vec x\right)\right)
       & \text{if }k_1=0
             \end{cases}
  \end{array}
  \else
  \begin{array}{rl}
    k_3 &{}:= \begin{cases}-\eps & \text{if }k_1=k_2=-\eps\\
                          0     & \text{if }k_1=0 \lor k_2 = 0
             \end{cases}\\
    F_3(\vec x) & :\equiv
    LA_1^*\left(\frac{-s_1(\vec x)}{c_1}\right) \land 
    LA_2^*\left(-\frac{-s_1(\vec x)}{c_1}\right)
  \end{array}
  \fi
  \tag{real case}
\end{equation}

\begin{techreport}

Note that the formula of the integer case is asymmetric.  If
$\ceilfrac{k_2+1}{c_2} < \ceilfrac{k_1+1}{c_1}$ we can replace $-s_1$
by $s_2$, $k_1$ by $k_2$, and $c_1$ by $c_2$.  This leads to a fewer 
number of disjuncts in $F_3$.
\ifnewinterpolation  
Also note that we can remove $F_1$ from the last disjunct of $F_3$, 
as it will always be true. \fi

\end{techreport}

With these definitions we can state the following lemma.
\ifnewinterpolation
\begin{lemma}\label{lemma_la}
  Let for $i=1,2$, $s_i(\vec x)$ be linear terms over $\vec x$,
  $c_i \geq 0$, $k_i \in\mathbb{Z}_{\geq -1}$
  (integer case) or $k_i\in\{0,-\eps\}$ (real case), 
  $F_i(x_i,\vec x)$ monotone formulas with
  $F_i(x_i,\vec x)= \bot$ for $c_i x_i + s_i(\vec x) > 0$ and 
  $F_i(x_i,\vec x)= \top$ for $c_i x_i + s_i(\vec x) < -k_i$.  
  Let $s_3,k_3, F_3$ be as defined above.
  Then $F_3$ is monotone, $F_3(\vec x)=\bot$ for $s_3(\vec x) > 0$ and
  $F_3(\vec x)=\top$ for $s_3(\vec x)< -k_i$.
\end{lemma}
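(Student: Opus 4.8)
The plan is to collapse all three assertions onto a single characterisation of $F_3$ as an existential quantifier and then read off monotonicity and the two boundary conditions uniformly. Concretely, I would first show that in every case
\[ F_3(\vec x)\;\equiv\;\exists x_1.\; F_1(x_1,\vec x)\land F_2(-x_1,\vec x), \]
with $x_1$ ranging over $\mathbb{Z}$ in the integer case and over $\mathbb{Q}$ in the real case (the implicit substitution being $x_2=-x_1$). Granting this equivalence, the three conclusions follow at once. For monotonicity: if $\vec x\geq \vec x'$ and $x_1$ witnesses $F_3(\vec x)$, the \emph{same} $x_1$ witnesses $F_3(\vec x')$, since $F_1$ and $F_2$ are monotone in all their arguments. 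For the $\bot$-condition: any witness must satisfy $c_1x_1+s_1(\vec x)\leq 0$ and $-c_2x_1+s_2(\vec x)\leq 0$ (otherwise $F_1$ resp.\ $F_2$ would be $\bot$); multiplying these by $c_2\geq 0$ resp.\ $c_1\geq 0$ and adding cancels $x_1$ and yields $s_3(\vec x)=c_2s_1(\vec x)+c_1s_2(\vec x)\leq 0$, so $s_3(\vec x)>0$ forces $F_3=\bot$. For the $\top$-condition it then remains only to exhibit a witness whenever $s_3(\vec x)<-k_3$.

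For the integer case the forward direction of the equivalence is trivial, as each disjunct supplies a witness. For the converse I would argue that any witness can be moved into the range of the disjunction: $c_1x_1+s_1(\vec x)\leq 0$ gives $x_1\leq\floorfrac{-s_1(\vec x)}{c_1}=:m$, so $x_1$ has the form $m-i$ with $i\geq 0$; and if $i>\ceilfrac{k_1+1}{c_1}=:N$, I replace $x_1$ by the larger value $m-N$, which keeps $F_2(-x_1)$ true by monotonicity (decreasing the argument preserves truth) and makes $F_1(m-N)$ true because $c_1N\geq k_1+1$ guarantees $c_1(m-N)+s_1(\vec x)<-k_1$. This proves the finite disjunction complete. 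For the $\top$-condition I would then observe that any integer in the open interval $\big(\frac{s_2(\vec x)+k_2}{c_2},\,\frac{-s_1(\vec x)-k_1}{c_1}\big)$ is a guaranteed witness; its length is $\frac{-s_3(\vec x)-c_2k_1-c_1k_2}{c_1c_2}$, and $s_3(\vec x)<-k_3=-(c_2k_1+c_1k_2+c_1c_2)$ makes this exceed $1$, so an integer lies inside.

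For the real case I would split on $k_1$ as the definition does. When $k_1=-\eps$ the formula $F_1$ is fully determined — true exactly for $c_1x_1+s_1\leq 0$ — so the existential is largest at the boundary witness $x_1=-s_1(\vec x)/c_1$, and monotonicity of $F_2$ collapses it to $F_3=F_2(s_1(\vec x)/c_1,\vec x)$; the $\top$-threshold then follows because $s_3<-k_3=-k_2$ together with $k_2\leq 0$ and $c_1\geq 1$ gives $-k_2\leq -c_1k_2$, hence $c_2s_1+c_1s_2<-c_1k_2$, which is exactly the condition making $F_2$ true at $s_1/c_1$. When $k_1=0$, $F_1$ is undetermined precisely on the boundary, so it cannot be simplified away; instead I case-split on the sign of $s_3$: if $s_3<0$ there is an open gap $\big(\frac{s_2}{c_2},\frac{-s_1}{c_1}\big)$ and any strictly interior $x_1$ is a witness (reflected by the first disjunct $s_3<0$), whereas if $s_3\geq 0$ the only candidate is the boundary value, giving the second disjunct $F_1(-s_1/c_1,\vec x)\land F_2(s_1/c_1,\vec x)$; together these reproduce the displayed $F_3$, and $-k_3=0$ is exactly the first disjunct's guard.

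I expect the main obstacle to lie in the two points where the clean ``pick-a-witness'' picture degrades. The first is the completeness of the finite disjunction in the integer case: one must check that the top index $N=\ceilfrac{k_1+1}{c_1}$ is large enough that $F_1$ has already turned $\top$ at $m-N$, which is where the precise value of $N$ and the inequality $c_1N\geq k_1+1$ are consumed. The second is the boundary evaluation $F_1(-s_1/c_1,\vec x)$ in the real $k_1=0$ case, where $F_1$ sits exactly on its undetermined value $c_1x_1+s_1=0$ and so can be resolved neither to $\top$ nor to $\bot$; this is precisely why the definition must keep $F_1$ explicit and adjoin the disjunct $s_3<0$. Throughout I would also keep an eye on the $\eps$-arithmetic (reading $-\eps$ as $-1$ in the integer case) and on the positivity of $c_1,c_2$, which is needed both for the division in $\floorfrac{-s_1}{c_1}$ and for the sign-preserving additions in the $\bot$-argument.
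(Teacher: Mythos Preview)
Your proposal is correct in spirit and actually inverts the paper's order of presentation: the paper states the boundary/monotonicity lemma and the existential characterisation $F_3(\vec x)\leftrightarrow \exists x_1.\,F_1(x_1,\vec x)\land F_2(-x_1,\vec x)$ as two separate lemmas and proves them independently and directly, whereas you establish the equivalence first and then read off all three properties from it. For the $\bot$- and $\top$-conditions the two routes end up doing essentially the same work (the paper's $\top$-argument in the integer case picks the specific witness $y=\floorfrac{-s_1}{c_1}-\ceilfrac{k_1+1}{c_1}$, i.e.\ the last index of the disjunction, which is exactly your $m-N$). The real difference is monotonicity: the paper argues in one line that ``$F_1$ and $F_2$ occur only positively in $F_3$, hence $F_3$ is monotone'', which glosses over the fact that the first arguments $\floorfrac{-s_1(\vec x)}{c_1}-i$ themselves depend on $\vec x$ and do not move in the right direction under $\vec x\geq\vec x'$; making that line rigorous requires an index shift very close to your witness argument. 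Your detour through the existential gives the monotonicity proof for free and is cleaner; the price is that you are effectively proving the paper's second lemma in order to obtain the first.

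One small gap to fix: in the real $k_1=-\eps$ case you invoke $c_1\geq 1$ to get $-k_2\leq -c_1k_2$, but the hypotheses only give $c_1>0$ (needed anyway for the division $-s_1/c_1$), and in \laq\ there is no reason $c_1\geq 1$. The conclusion you want, namely $c_2\cdot\frac{s_1}{c_1}+s_2<-k_2$, still holds: it is $s_3/c_1<-k_2$, and since $k_2\in\{0,-\eps\}$ one checks directly that $s_3<-k_2$ together with $c_1>0$ suffices (for $k_2=0$ this is $s_3<0\Rightarrow s_3/c_1<0$; for $k_2=-\eps$ it is $s_3\leq 0\Rightarrow s_3/c_1\leq 0$). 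Replace the $c_1\geq 1$ step by this case split and the argument goes through.
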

\begin{proof}
  Since $F_1$ and $F_2$ are monotone and they occur only positively in
  $F_3$, $F_3$ must also be monotone.  If $s_3(\vec x)> 0$, then
  $\frac{-s1(\vec x)}{c_1} < \frac{s_2}{c_2}$.  Hence, for 
  every $x \leq \frac{-s(\vec x)}{c_1}$, $F_2(-x, \vec x)$ is false 
  since $-c_2 x + s_2(\vec x) > 0$.  By definition, every disjunct 
  of $F_3$ (except $s_3(\vec x) < 0$) 
  contains $F_2(-x,\vec x)$ for such an $x$, so $F_3(\vec x)$ 
  is false.

  Now assume $s_3(\vec x) < -k_3$. For $k_1=-\eps$ in the real case,
  $F_3(\vec x) =
  F_2(-\frac{s_1(\vec x)}{c_1})$ is true since $s_1(\vec x) + s_2(\vec
  x) < -k_2$.  For $k_1=0$, $F_3$ is true by definition.
  In the integer case define $y := \floorfrac{-s_1(\vec x)}{c_1} -
  \ceilfrac{k_1+1}{c_1}$.  
  This implies $c_1 y \leq -s_1(\vec x)-k_1-1$, hence $F_1(y, \vec x)$ holds.
  Also $c_1 y \geq -s_1(\vec x)-k_1-c_1$, hence
  \[c_1c_2y + c_1s_2(\vec x) \geq -s_3(\vec x) - c_2k_1 - c_1c_2 > k_3 - c_2k_1 - c_1c_2 =  c_1k_2.\]
  Therefore, $F_2(-y,\vec x)$ holds. Since $y$ is included in the big
  disjunction of $F_3$, $F_3(\vec x)$ is true.
\end{proof}
\fi

\begin{lemma}\label{lemma_la}
  Let for $i=1,2$, $s_i(\vec x)$ be linear terms over $\vec x$,
  $c_i \geq 0$, $k_i \in\mathbb{Z}_{\geq -1}$
  (integer case) or $k_i\in\{0,-\eps\}$ (real case), 
  \ifnewinterpolation
  $F_i(x_i,\vec x)$ monotone formulas with
  $F_i(x_i,\vec x)= \bot$ for $c_i x_i + s_i(\vec x) > 0$ and 
  $F_i(x_i,\vec x)= \top$ for $c_i x_i + s_i(\vec x) < -k_i$.  
  Let $\ifnewinterpolation\else s_3,k_3,\fi F_3$ be as defined above.
  \else
  $F_i(x_i,\vec x)$ arbitrary formulae and $s_3,k_3, F_3$ as defined above.
  \fi
  Then 
  \ifnewinterpolation
  \[
    F_3(\vec x) \leftrightarrow (\exists x_1. F_1(x_1, \vec x)\land F_2(-x_1, \vec x))
    \]
  \else
  \[
    \begin{array}{c}
    (\exists x_1. LA\S(c_1 x_1 + s_1(\vec x), k_1, F_1(x_1, \vec x))\land
                  LA\S(-c_2 x_1 + s_2(\vec x), k_2, F_2(-x_1, \vec x)))\\
    \rightarrow  LA\S(s_3(\vec x),k_3, F_3(\vec x))
    \end{array}
  \]
  and
  \[
    \begin{array}{c}
    LA\W(s_3(\vec x),k_3, F_3(\vec x)) \rightarrow \\
    (\exists x_1. LA\W(c_1 x_1 + s_1(\vec x), k_1, F_1(x_1, \vec x))\land
    LA\W(-c_2 x_1 + s_2(\vec x), k_2, F_2(-x_1, \vec x)))
    \end{array}
  \]
  \fi
\end{lemma}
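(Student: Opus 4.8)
The plan is to prove the biconditional $F_3(\vec x) \leftrightarrow \exists x_1.\, F_1(x_1,\vec x)\land F_2(-x_1,\vec x)$ by establishing the two implications separately and case-splitting exactly along the integer/real split used in the definition of $F_3$. Throughout I abbreviate $d := \floorfrac{-s_1(\vec x)}{c_1}$ and $s_3 := c_2 s_1(\vec x) + c_1 s_2(\vec x)$, and I keep $\vec x$ fixed, so I write $F_i(v)$ for $F_i(v,\vec x)$. The only facts I need about $F_1,F_2$ are the two range conditions and monotonicity: a value $v$ can satisfy $F_1(v)$ only if $c_1 v + s_1(\vec x) \leq 0$ (else $F_1=\bot$), and $F_1(v)$ certainly holds if $c_1 v + s_1(\vec x) < -k_1$ (then $F_1=\top$); symmetrically for $F_2$; and $v \geq v'$ implies $F_i(v)\rightarrow F_i(v')$. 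I take $c_1,c_2 > 0$ so the divisions are well defined (the degenerate $c_i=0$ case is immediate), and I take the quantifier $\exists x_1$ to range over $\mathbb{Z}$ in the integer case and over $\mathbb{Q}$ (resp.\ $\mathbb{Q}_\eps$) in the real case.

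For the direction $\Rightarrow$, note that in the integer case $F_3$ is by definition a finite disjunction whose $i$-th disjunct is $F_1(d-i)\land F_2(i-d)$; since $i-d=-(d-i)$, the witness $x_1 := d-i$ of whichever disjunct is true establishes the right-hand side at once. The real case $k_1=-\eps$ is just as direct: there $F_3 = F_2(\frac{s_1(\vec x)}{c_1})$ and $x_1 := -\frac{s_1(\vec x)}{c_1}$ works, because $c_1 x_1 + s_1(\vec x) = 0 < \eps = -k_1$ forces $F_1(x_1)=\top$. In the real case $k_1=0$ the second disjunct $F_1(-\frac{s_1(\vec x)}{c_1})\land F_2(\frac{s_1(\vec x)}{c_1})$ again gives the witness $x_1 := -\frac{s_1(\vec x)}{c_1}$ directly; the only slightly delicate subcase is when instead the disjunct $s_3 < 0$ holds, and here I would produce a witness by density of $\mathbb{Q}$, choosing any $x_1$ in the open interval $(\frac{s_2(\vec x)+k_2}{c_2},\, \frac{-s_1(\vec x)}{c_1})$. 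This interval is non-empty because $s_3 < 0$ implies $s_3 < -c_1 k_2$, and for any such $x_1$ both $F_1(x_1)=\top$ and $F_2(-x_1)=\top$ follow from the range conditions.

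For the direction $\Leftarrow$, fix a witness $x_1$ with $F_1(x_1)\land F_2(-x_1)$. The two range facts give the sandwich $\frac{s_2(\vec x)}{c_2} \leq x_1 \leq \frac{-s_1(\vec x)}{c_1}$, so in the integer case $x_1 \leq d$. The real cases close immediately: for $k_1=-\eps$, $x_1 \leq \frac{-s_1(\vec x)}{c_1}$ gives $-x_1 \geq \frac{s_1(\vec x)}{c_1}$, and monotonicity of $F_2$ turns $F_2(-x_1)$ into $F_3 = F_2(\frac{s_1(\vec x)}{c_1})$; for $k_1=0$, either $x_1 = \frac{-s_1(\vec x)}{c_1}$, which yields the second disjunct of $F_3$ directly, or $x_1 < \frac{-s_1(\vec x)}{c_1}$ strictly, whence $\frac{s_2(\vec x)}{c_2} < \frac{-s_1(\vec x)}{c_1}$ gives $s_3 < 0$. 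In the integer case, if additionally $x_1 \geq d - \ceilfrac{k_1+1}{c_1}$, then because the values $d-i$ are consecutive integers, $x_1 = d-i$ for some index $i$ in range and that exact disjunct $F_1(x_1)\land F_2(-x_1)$ of $F_3$ holds.

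The main obstacle is the remaining integer subcase $x_1 < d - \ceilfrac{k_1+1}{c_1}$, where the witness lies below every value occurring in the bounded disjunction, so no disjunct is hit on the nose. The key observation that I expect to resolve it is that the least disjunct value $v_* := d - \ceilfrac{k_1+1}{c_1}$ is chosen small enough to force $F_1(v_*)=\top$ outright: using $c_1 d \leq -s_1(\vec x)$ and $c_1 \ceilfrac{k_1+1}{c_1} \geq k_1+1$ one gets $c_1 v_* + s_1(\vec x) \leq -(k_1+1) < -k_1$. Simultaneously $v_* > x_1$ gives $-v_* < -x_1$, so monotonicity of $F_2$ propagates $F_2(-x_1)$ up to $F_2(-v_*)$. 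Hence the last disjunct $F_1(v_*)\land F_2(-v_*)$ of $F_3$ holds, which finishes the proof. This subcase is precisely where both the monotonicity hypothesis on $F_2$ and the tailored choice of the summation bound $\ceilfrac{k_1+1}{c_1}$ (equivalently, of $k_3$) are indispensable; everything else is routine.
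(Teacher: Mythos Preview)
Your proof is correct and follows essentially the same approach as the paper's own proof: the same case split on integer versus real and on $k_1$, the same witness $v_* = \floorfrac{-s_1(\vec x)}{c_1} - \ceilfrac{k_1+1}{c_1}$ (the paper calls it $y$) for the out-of-range integer subcase, and the same use of monotonicity of $F_2$ to push the witness to the boundary in both that subcase and the real $k_1=-\eps$ case. The only cosmetic difference is that in the real $k_1=0$, $s_3<0$ subcase you pick $x_1$ from the slightly smaller interval $(\frac{s_2+k_2}{c_2},\frac{-s_1}{c_1})$ while the paper uses $(\frac{s_2}{c_2},\frac{-s_1}{c_1})$; both work.
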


\begin{techreport}
\ifnewinterpolation
\begin{proof}[for \laz]
Since $F_3$ is a disjunction of $F_1(x, \vec x) \land F_2(-x, \vec x)$
for different values of $x$, the implication from left to right is obvious.
We only need to show the other direction.  
For this, choose $x_1$ such that $F_1(x_1,\vec x) \land F_2(-x_1, \vec x)$ 
holds. We show $F_3(\vec x)$.
We define $y := \floorfrac{-s_1(\vec x)}{c_1} -
\ceilfrac{k_1+1}{c_1}$.  This implies $y \leq \frac{-s_1(\vec x)-k_1-1}{c_1}$.
We show $F_3$ by a case split on $x_1 < y$.

\paragraph{Case $x_1 < y$.}
Since $F_2$ is monotone and $-x_1 > -y$, we have $F_2(-y,\vec x)$.  Also
$F_1(y,\vec x)$ holds since $c_1 y + s_1(\vec x) < -k_1$.  This 
implies $F_3(\vec x)$,
since $F_1(y,\vec x)\land F_2(-y,\vec x)$ is a disjunct of $F_3$.

\paragraph{Case $y \leq x_1$.}
Since $F_1(x_1,\vec x)$ holds, $c_1 x_1 + s_1(\vec x) \leq 0$, hence
$x_1 \leq \floorfrac{-s_1(\vec x)}{c_1}$.  Thus, $x_1$ is one of the values
$\floorfrac{-s_1(\vec x)}{c_1} - i$ for $0 \leq i \leq \ceilfrac{k_1+1}{c_1}$.
This means the disjunction $F_3(\vec x)$ includes $F_1(x_1,\vec x)\land F_2(-x_1,\vec x)$.\qed
\end{proof}

\begin{proof}[for \laq]
In the case $k_1 = -\eps$, $F_1(\frac{-s_1}{c_1},\vec x)$ is true.
From the definition of $F_3$, we get the implication $F_3(\vec x)
\rightarrow \exists x_1. F_1(x_1,\vec x) \land F_2(-x_1,\vec x)$ for
$x_1= \frac{-s_1}{c_1}$.  If $k_1 = 0$ and $s_3(x) < 0$, then
$\frac{s_2}{c_2} < \frac{-s_1}{c_1}$ and for any value $x_1$ in between,
$F_1(x_1,\vec x) \land F_2(-x_1,\vec x)$ are true.

For the other direction assume that $F_1(x_1, \vec x) \land F_2(-x_1, \vec x)$
holds.  Since $F_1$ is not false, $x_1\leq \frac{-s_1}{c_1}$ holds.
If $x_1 = \frac{-s_1}{c_1}$ then $F_3$ holds by definition.
In the case $k_1 = 0$ where $x_1 < \frac{-s_1}{c_1}$, we have 
$s_3(\vec x) < 0$, since $F_2(-x_1, \vec x)$ is not false.  
In the case $k_1 = -\eps$, we need to show that $F_2(\frac{s_1}{c_1},\vec x)$
holds. This follows from $x_1\leq \frac{-s_1}{c_1}$ and monotonicity of $F_2$.
\qed
\end{proof}
\else
\begin{proof}

We define the following shorthands for the formulae and their parts: 
\begin{align*}
LA\S_1(x_1) & := 
 \forall x'_1\leq x_1,\vec x'\leq \vec x.\ \\
  &\phantom{{}:={}}
    \underbrace{c_1 x_1' + s_1(\vec x') \leq 0    }_{(1.1)} \land 
   (\underbrace{c_1 x_1' + s_1(\vec x') \geq -k_1 }_{(1.2)} \rightarrow
    \underbrace{F_1 (x_1', \vec x')               }_{(1.3)}) \\
%
LA\S_2(-x_1) & := 
 \forall x'_2\leq -x_1,\vec x'\leq \vec x.\ \\
  &\phantom{{}:={}}
   \underbrace{c_2 x_2' + s_2(\vec x') \leq 0     }_{(2.1)} \land
  (\underbrace{c_2 x_2' + s_2(\vec x') \geq -k_2  }_{(2.2)} \rightarrow
   \underbrace{F_2 (x_2', \vec x')                }_{(2.3)}) \\
%
LA\S_3 & := 
 \forall \vec x'\leq \vec x.\ \\
  &\phantom{{}:={}}
   \underbrace{c_2 s_1(\vec x') + c_1 s_2(\vec x') \leq 0    }_{(3.1)} \land
  (\underbrace{c_2 s_1(\vec x') + c_1 s_2(\vec x') \geq -k_3 }_{(3.2)} \rightarrow 
   \underbrace{F_3(\vec x')                                  }_{(3.3)} ) \\
\end{align*}

\paragraph*{Implication on Strong Interpolant for \laz.}

Choose $x_1$ such that $LA\S_1(x_1) \land LA\S_2(-x_1)$ holds.  We
need to show that $LA\S_3$ holds for all $\vec x' \leq \vec x$.  Instantiate
$\vec x'$ in $LA\S_1$ and $LA\S_2$ with the same values.  

First we show (3.1).  We choose $x'_1=x_1$ in $LA_1\S$ and $x'_2=-x_1$ in $LA_2\S$.
From (1.1) and (2.1) it follows that 
\[\frac{s_2(\vec x')}{c_2} \leq x_1 \leq -\frac{s_1(\vec x')}{c_1} \tag{$*$}.\] 
By transitivity and some simple transformations we get (3.1).

Now we show the implication (3.2) $\rightarrow$ (3.3) by showing $F_3$.
As another shorthand, we define $y := \floorfrac{-s_1(\vec x')}{c_1} -
\ceilfrac{k_1+1}{c_1}$.  This implies $y \leq \frac{-s_1(\vec x')-k_1-1}{c_1}$.
We show $F_3$ by a case split on $x_1 < y$.

\paragraph{Case $x_1 < y$.}
We instantiate $x_2'$ in $LA\S_2(-x_1)$ with $-y$ (which fulfils $x_2' \leq
-x_1$) and obtain $LA^*_2(-y)$.
Next we show $LA^*_1(y)$.  From the choice of $y$ we get
\[c_1 y + s_1(\vec x') \leq -s_1(\vec x')-k_1-1 + s_1(\vec x') \leq -k_1 -1\]
Since $k_1\geq -1$ the first conjunct of $LA^*_1(y)$ holds. Also the second conjunct
of $LA^*_1(y)$ holds vacuously.

Since, for $i=\ceilfrac{k_1+1}{c_1}$, $F_3$ contains $LA^*_1(y) \land
LA^*_2(-y)$ in the big disjunction, $F_3$ holds.

\paragraph{Case $y\leq x_1$:}
Then, we instantiate $x_1'$ with $x_1$ in $LA\S_1(x_1)$ and $x_2'$ with $-x_1$
in $LA\S_2(-x_1)$.
It remains to be shown that $LA^*_1(x_1) \land LA^*_2(-x_1)$ is contained
in the disjunction $F_3$, namely for $i:= \floorfrac{-s_1}{c_1} - x_1$.  Since
$x_1$ is integral, $i$ is also integral.  Formula $(*)$ implies $i\geq 0$ and
$y\leq x_1$ implies $i\leq \ceilfrac{k_1+1}{c_1}$.

\paragraph*{Implication on Strong Interpolant for \laq.}
Like in the integer case, 
we choose $x_1$ such that $LA\S_1(x_1) \land LA\S_2(-x_1)$ holds.  We
need to show that $LA\S_3$ holds for all $\vec x' \leq \vec x$ and instantiate
$\vec x'$ in $LA\S_1$ and $LA\S_2$ with the same values.  We also instantiate 
$x_1'$ with $x_1$ and $x_2'$ with $-x_1$. From (1.1) and (2.1) we get (3.1)
\[\frac{s_2(\vec x')}{c_2} \leq x_1 \leq -\frac{s_1(\vec x')}{c_1} \tag{$*$}\]

If $(3.2)\equiv c_2s_1(\vec x') + c_1s_2(\vec x') \geq -k_3$ holds, we can extend
this to
\[\frac{s_2(\vec x')}{c_2} \leq x_1
\leq -\frac{s_1(\vec x')}{c_1} \leq \frac{s_2(\vec x')}{c_2} +
\frac{k_3}{c_1c_2}.\]
Then, $k_3$ cannot be $-\eps$, so it must be $0$ and equality holds in the
above inequality chain.  Thus, $F_3$ is equivalent to $LA^*_1(x_1) \land
LA^*_2(-x_1)$, so (3.3) holds.

\paragraph*{Implication on Weak Interpolant for \laz.}

For $LA\W$ we use similar shorthands:
\begin{align*}
LA\W_1(x_1) & := 
 \exists x'_1\geq x_1,\vec x'\geq \vec x.\ \\
    & \phantom{{}:={}}
    \underbrace{c_1 x_1' + s_1(\vec x') \leq 0    }_{(1.1)} \land
   (\underbrace{c_1 x_1' + s_1(\vec x') \geq -k_1 }_{(1.2)} \rightarrow
    \underbrace{F_1 (x_1', \vec x')               }_{(1.3)}) \\
%
LA\W_2(-x_1) & := 
 \exists x'_2\geq -x_1,\vec x'\geq \vec x.\ \\
    & \phantom{{}:={}}
   \underbrace{c_2 x_2' + s_2(\vec x') \leq 0     }_{(2.1)} \land
  (\underbrace{c_2 x_2' + s_2(\vec x') \geq -k_2  }_{(2.2)} \rightarrow
   \underbrace{F_2 (x_2', \vec x')                }_{(2.3)}) \\
%
LA\W_3 & := 
 \exists \vec x'\geq \vec x.\ \\
    & \phantom{{}:={}}
   \underbrace{c_2 s_1(\vec x') + c_1 s_2(\vec x') \leq 0    }_{(3.1)} \land
  (\underbrace{c_2 s_1(\vec x') + c_1 s_2(\vec x') \geq -k_3 }_{(3.2)} \rightarrow 
   \underbrace{F_3(\vec x')                                  }_{(3.3)} ) \\
\end{align*}

We want to prove $LA\W_3 \rightarrow \exists x_1.\ LA\W_1(x_1) \land
LA\W_2(-x_1)$.  Thus, we have to find a common value for $x_1$ for both
$LA\W_1$ and $LA\W_2$.  Assume $LA\W_3$ holds for some $\vec x' \geq \vec x$.
We will instantiate $\vec x'$ in $LA\W_1$ and $LA\W_2$ by the same value and
instantiate $x_1'$ by $x_1$, and $x_2'$ by $-x_1$, after we determined the
value for $x_1$.  Thus we have to show that $LA^*_1(x_1) \wedge LA^*_2(-x_1)$
holds. Now, we do a case split on (3.2).

If (3.2) holds, (3.3), that is $F_3(\vec x')$, has to hold. This immediately implies that 
there is one $i$ fulfilling 
\[ LA^*_1\left(\floorfrac{-s_1(\vec x')}{c_1} - i\right) \land LA^*_2\left(i-\floorfrac{-s_1(\vec x')}{c_1}\right).\]
Thus, with $x_1 = \floorfrac{-s_1(\vec x')}{c_1} - i$, $LA^*_1(x_1) \wedge LA^*_2(-x_1)$ holds.

If (3.2) does not hold,
we choose $x_1 = \floorfrac{-s_1(\vec x')-k_1-1}{c_1}
\leq \frac{-s_1(\vec x')-k_1-1}{c_1}$.  Then,
\[ c_1 x_1 + s_1(\vec x') \leq -k_1 -1\]
which fulfils (1.1) (since $k_1 \geq -1$) and refutes (1.2).  
Thus, $LA^*_1(x_1)$ holds.  
Since $-s_1(\vec x')-k_1$ is integral we have
\begin{align*}
  &x_1 = \floorfrac{-s_1(\vec x')-k_1-1}{c_1} = 
         \ceilfrac{-s_1(\vec x')-k_1}{c_1}-1 
       \geq \frac{-s_1(\vec x')-k_1}{c_1} - 1.\\
{}\Rightarrow{} &c_2 (-x_1) + s_2(\vec x') \leq 
  \frac{c_2s_1(\vec x') + c_1s_2(\vec x') +c_2k_1 + c_2 c_1}{c_1}
\end{align*}
Since (3.2) does not hold, we get
\[ c_2 (-x_1) + s_2(\vec x')
   < \frac {-k_3 + c_2 k_1 + c_2c_1 }{c_1} = -k_2, \]
which fulfils (2.1) and refutes (2.2).  Thus, $LA^*_2(-x_1)$ holds.  

\paragraph*{Implication on Weak Interpolant for \laq.}

As in the integer case we have to find a common value for $x_1$ for both
$LA\W_1$ and $LA\W_2$.  Assume $LA\W_3$ holds for some $\vec x' \geq \vec x$.
Again we will instantiate $\vec x'$ in $LA\W_1$ and $LA\W_2$ by the same value and
instantiate $x_1'$ by $x_1$, and $x_2'$ by $-x_1$, after we determined the
value for $x_1$.  Again, we do a case split on (3.2).

If (3.2) holds, then $F_3$ holds, i.\,e., $LA^*_1(x_1) \wedge LA^*_2(-x_1)$ holds for $x_1 =\frac{-s_1(\vec x')}{c_1}$.
Otherwise, we choose 
\[x_1:=\dfrac{\frac{s_2(\vec x')}{c_2}+\frac{-s_1(\vec x')}{c_1}}{2}.\]
From (3.1) we know $\frac{s_2(\vec x')}{c_2}\leq\frac{-s_1(\vec x')}{c_1}$. Hence,
\[\frac{s_2(\vec x')}{c_2} \leq x_1 \leq \frac{-s_1(\vec x')}{c_1}.\]
This implies (1.1) and (2.1).  If $k_3 = -\eps$ then $k_1=-\eps$ and
$k_2=-\eps$, so also (1.2) and (2.2) are refuted.  In this case
$LA^*_1(x_1) \wedge LA^*_2(-x_1)$ holds.
If $k_3 = 0$, the negation of (3.2) implies
\[\frac{s_2(\vec x')}{c_2}< x_1 < \frac{-s_1(\vec x')}{c_1}.\]
Thus, (1.2) and (2.2) are refuted and $LA^*_1(x_1) \wedge LA^*_2(-x_1)$ holds.
\qed

\end{proof}
\fi

\end{techreport}

This lemma can be used to show that (\ref{rule:intla}) is correct.
\begin{theorem}[Soundness of (\ref{rule:intla})]
  Let $a+b\leq c$ be a mixed literal with the auxiliary variable $x_2$, and
  $x_1$ be the auxiliary variable of the negated literal.  If
  $I_1[LA(c_1 x_1 + s_1, k_1, F_1)]$ 
  \ifnewinterpolation is a partial interpolant 
  \else yields two partial interpolants (strong and weak) \fi
  of $C_1 \lor a+b\leq c$ and 
  $I_2[LA(c_2 x_2 + s_2, k_2, F_2)]$ 
  \ifnewinterpolation is a partial interpolant 
  \else yields two partial interpolants \fi
  of $C_2 \lor \lnot(a+b\leq c)$ then
  $I_1[I_2[LA(s_3, k_3, F_3)]]$
  \ifnewinterpolation is a partial interpolant 
  \else yields two partial interpolants \fi
  of the clause $C_1 \vee C_2$.
\end{theorem}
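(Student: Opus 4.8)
The plan is to instantiate Lemma~\ref{lem:weakstrongip} for the candidate pattern $I_3 := I_1[I_2[LA(s_3,k_3,F_3)]]$. The pivot $\ell \equiv a+b\leq c$ carries the auxiliary variables $x_1$ (of $\lnot\ell$, occurring in $I_1$) and $x_2$ (of $\ell$, occurring in $I_2$), so that $\lnot\ell\proj A$ is $-a+x_1\leq 0$, $\ell\proj A$ is $a+x_2\leq 0$, $\lnot\ell\proj B$ is $-x_1-b\leq -c-\eps$, and $\ell\proj B$ is $-x_2+b\leq c$; recall that here $LA(s,k,F)$ abbreviates $F$. The symbol condition is immediate, since the double substitution eliminates $x_1$ and $x_2$ while $s_3=c_2s_1+c_1s_2$ and $F_3$ are assembled only from the shared data $s_1,s_2,F_1,F_2,\vec x$; that the new term $LA(s_3,k_3,F_3)$ is itself admissible (i.e.\ $F_3$ monotone, $\bot$ for $s_3>0$, $\top$ for $s_3<-k_3$, with $\vec x$ occurring with positive coefficients in $s_3$) is exactly furnished by Lemma~\ref{lemma_la}. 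It then remains to verify (ind) and (cont), whose quantifiers range over $x_1,x_2$.

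For (ind) I would instantiate the hypothesis at the boundary point $x_1:=a$, $x_2:=-a$, where both $\lnot\ell\proj A$ and $\ell\proj A$ hold with equality; this extracts $I_1[F_1(a,\vec x)]$ and $I_2[F_2(-a,\vec x)]$. The equivalence $F_3(\vec x)\leftrightarrow\exists x_1.\,F_1(x_1,\vec x)\land F_2(-x_1,\vec x)$ of Lemma~\ref{lemma_la} supplies the pointwise implication $F_1(a,\vec x)\land F_2(-a,\vec x)\rightarrow F_3(\vec x)$ (take the existential witness to be $a$), so applying the Deep Substitution lemma with outer context $I_1$ and inner context $I_2$ yields $I_1[F_1(a,\vec x)]\land I_2[F_2(-a,\vec x)]\rightarrow I_1[I_2[F_3(\vec x)]]$, which is precisely (ind).

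The condition (cont) is the crux and I expect it to be the main obstacle. First I would establish the purely logical implication $F_3(\vec x)\rightarrow F_1(c-b+\eps,\vec x)\lor F_2(b-c,\vec x)$: any witness $\hat x$ of $F_3$ satisfies $F_1(\hat x,\vec x)\land F_2(-\hat x,\vec x)$, and by monotonicity of $F_1,F_2$ the first conjunct implies $F_1(c-b+\eps,\vec x)$ when $\hat x\geq c-b+\eps$, while the second implies $F_2(b-c,\vec x)$ when $\hat x\leq c-b$ — an exhaustive split (with $+1$ replacing $\eps$ in the integer case). Pushing this through the positive contexts with Lemma~\ref{lemma:monotonicity} gives $I_3\rightarrow I_1[I_2[F_1(c-b+\eps,\vec x)\lor F_2(b-c,\vec x)]]$. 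I would then un-nest using the elementary fact that $I[P\lor Q]\rightarrow P\lor I[Q]$ for any negation-normal-form context $I[\cdot]$ and any $P$ not mentioning the hole — proved in one line by case analysis on $P$, since $P\lor Q$ and $Q$ coincide in place whenever $P$ is false. Applying it first to $I_2$ (pulling out $F_1(c-b+\eps,\vec x)$) and then to $I_1$ (pulling out $I_2[F_2(b-c,\vec x)]$) reduces $I_3$ to $I_1[F_1(c-b+\eps,\vec x)]\lor I_2[F_2(b-c,\vec x)]$. Choosing the existential witnesses $x_1:=c-b+\eps$ and $x_2:=b-c$ makes $\lnot\ell\proj B$ resp.\ $\ell\proj B$ true, so whichever disjunct holds, the matching conjunct of (cont) is satisfied.

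Beyond the routine algebra, two points need care: the strict/non-strict boundary $c-b+\eps$ versus $b-c$ and its integer $\pm 1$ analogue (exactly where the real and integer halves of Lemma~\ref{lemma_la} diverge), and the un-nesting step, which is not isolated as a lemma in the text but follows by the semantic case split above and, being insensitive to how often the $LA$-pattern occurs, simultaneously shows that the argument is not restricted to a single occurrence of the pattern.
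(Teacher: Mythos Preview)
Your treatment of the symbol condition and of (ind) is correct and essentially identical to the paper's: both instantiate the hypothesis at $x_1:=a$, $x_2:=-a$ and then combine the $\leftarrow$-direction of Lemma~\ref{lemma_la} with the Deep Substitution lemma.

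For (cont) you take a genuinely different route, and here there is a gap in the rational case. The paper does \emph{not} fix a single value for $x_1$ in advance; it performs an external case split on whether, for every hole $i$ of $I_1$, the formula $I_2[LA_{3ij}]$ implies $\exists x_1>c-b.\,LA_{1i}(x_1)$. In the affirmative branch it picks, for each relevant $i$, a witness $x_1^{(i)}>c-b$ and takes their minimum as the common instantiation (using monotonicity of each $F_{1i}$ to push all of them down to that minimum); in the negative branch it uses the failing index together with Lemma~\ref{lemma_la} to force the $I_2$-disjunct with $x_2:=b-c$.

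Your pointwise implication $F_3(\vec x)\rightarrow F_1(v,\vec x)\lor F_2(b-c,\vec x)$ followed by the un-nesting principle $I[P\lor Q]\rightarrow P\lor I[Q]$ is an elegant alternative and works cleanly over the integers with $v=c-b+1$. Over the rationals, however, there is no real number playing the role of $c-b+\eps$, so $F_1(c-b+\eps,\vec x)$ is not a well-formed formula and cannot serve as the witness for $x_1$ in (cont). Nor does \emph{any} fixed real $v>c-b$ suffice: with a shared variable $z$ take $F_1(x_1)\equiv(x_1\leq z)$ and $F_2(x_2)\equiv(x_2\leq -z)$ (both with $k_i=-\eps$); then $F_3\equiv\top$, yet $F_1(v)\lor F_2(b-c)$ reduces to $(v\leq z)\lor(z\leq c-b)$, which fails whenever $c-b<z<v$. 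Replacing $F_1(c-b+\eps)$ by the honest existential $\exists x_1>c-b.\,F_1(x_1)$ repairs the pointwise step, but then the final move $I_1[\exists x_1>c-b.\,F_{1i}(x_1)]\rightarrow\exists x_1>c-b.\,I_1[F_{1i}(x_1)]$ does not follow from your un-nesting lemma; in the multi-hole setting it is exactly where the paper's ``take the minimum over $i$'' argument is needed.
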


To ease the presentation, we gave the rule (\ref{rule:intla}) with only one
$LA$ term per partial interpolant. The generalised rule requires the partial interpolants of
the premises to have the shapes $I_1[LA_{11}]\dots[LA_{1n}]$ and
$I_2[LA_{21}]\dots[LA_{2m}]$.  The resulting interpolant is
\[I_1[I_2[LA_{311}]\dots[LA_{31m}]]\dots [I_2[LA_{3n1}]\dots[LA_{3nm}]]\]
where $LA_{3ij}$ is computed from $LA_{1i}$ and $LA_{2j}$
as explained above.

\begin{techreport}
\begin{proof}
\ifnewinterpolation\else
We already showed that the strong interpolant implies the weak
interpolant for the interpolation pattern used.
\fi
The symbol condition
holds for $I_3$ if it holds for $I_1$ and $I_2$, which can be seen as
follows.  The only symbol that is allowed to occur in $I_1$ resp.\
$I_2$ but not in $I_3$ is the auxiliary variable introduced by the
literal, i.e., $x_1$ resp.\ $x_2$.  This variable may only occur
inside the $LA_1$ resp.\ $LA_2$ terms as indicated and, by
construction, $x_1$ and $x_2$ do not occur in $LA_3$.  Furthermore,
the remaining variables from $\vec x$ occur in $s_3(\vec x)$ with a
positive coefficient as required by our pattern and occur only inside
the $LA$ pattern in $s_3$ and $F_3$.  Thus $I_3$ has the required
form.  We will use Lemma~\ref{lem:weakstrongip} to show that $I_3$ is
a partial interpolant.  For this we need to show inductivity~(ind) and
contradiction~(cont).

In this proof we will use $I_1[LA_{1i}(x_1)]$ to denote the first interpolant
\[I_1[LA(s_{11}+c_{11}x_1, k_{11}, F_{11})]\dots[LA(s_{1n}+c_{1n}x_1, k_{1n}, F_{1n})]\]
and similarly $I_2[LA_{2j}(x_2)]$ and $I_1[I_2[LA_{3ij}]]$, the latter standing for
\[I_1[I_2[LA_{311}]\dots[LA_{31m}]]\dots [I_2[LA_{3n1}]\dots[LA_{3nm}]]\]
where 
\[
LA_{3ij} = LA(c_{2j}s_{1i}+c_{1i}s_{2j}, k_{3ij}, F_{3ij}).
\]

\subsubsection*{Inductivity.}
We apply \ifnewinterpolation\else the first part of \fi Lemma~\ref{lemma_la} on $x_1=a$, which
gives us
\[\bigwedge_{ij} LA_{1i}\S(a) \wedge LA_{2j}\S(-a) \rightarrow LA_{3ij}\S\]
Using the deep substitution lemma, we obtain 
\[
  I_{1}\S\left[LA_{1i}\S(a)\right]\wedge I_{2}\S\left[LA_{2j}\S(-a)\right]\rightarrow I_{1}\S\left[I_{2}\S\left[LA_{3ij}\S\right]\right].
\tag{$*$}
\]

Now assume the left-hand-side of (ind), which in this case is
\[
    \forall x_1, x_2.\:  
      (-a + x_1 \leq 0\rightarrow I_1\S[LA\S_{1i}(x_1)])  \land 
      (a + x_2 \leq 0 \rightarrow I_2\S[LA\S_{2j}(x_2)]).
\]
Instantiating $x_1$ with $a$ and $x_2$ with $-a$ gives us
$I_1\S[LA\S_{1i}(a)]$ and $I_2\S[LA\S_{2j}(-a)]$.  
Thus by $(*)$, $I_{3}\S\equiv I_1\S[I_2\S[LA_{3ij}\S]]$ holds as desired.

\subsubsection*{Contradiction.}
We assume $I_1\W[I_2\W[LA_{3ij}\W]]$ and show
\[
  \exists x_1,x_2.\:
  (-x_1 - b < -c \land I_1\W[LA_{1i}\W(x_1)])  \lor
  (-x_2 +  b \leq c \land I_2\W[LA_{2j}\W(x_2)]) \tag{$*$}
\]
We do a case distinction on
\[\bigwedge_i (I_2\W[LA_{3ij}\W] \rightarrow 
 \exists x_1.\: x_1 > c - b \land LA_{1i}\W(x_1))\]
If it holds, then we may get a different value for $x_1$ for every
$i$.  However, if $LA_{1i}\W(x_1)$ holds for some value, it also holds
for any smaller value of $x_1$.  Take $x$ as the minimum of these
values (or $x=c-b+1$ if the implication holds vacuously for every
$i$).  Then, $-x -b < -c$ and
$\bigwedge_i (I_2\W[LA_{3ij}\W] \rightarrow LA_{1i}\W(x))$.
With monotonicity we get from $I_1\W[I_2\W[LA_{3ij}\W]]$ that
$I_1\W[LA_{1i}\W(x)]$ holds. Hence, the left disjunct of formula $(*)$ holds.

In the other case there is some $i$ with
\[I_2\W[LA_{3ij}\W] \land (\forall x_1.\: x_1 > c - b \rightarrow \lnot LA_{1i}\W(x_1)). \tag{$**$}\]
The second part of Lemma~\ref{lemma_la} gives us 
\[\bigwedge_j( LA_{3ij}\W \rightarrow \exists x_1.\: LA_{1i}\W(x_1) \land LA_{2j}\W(-x_1))\]
Then, $x_1 \leq c-b$ by $(**)$.  But if $LA_{2j}\W(-x_1)$ holds, then
$LA_{2j}\W$ also holds for the smaller value $b-c$.  This gives us
\[ \bigwedge_j (LA_{3ij}\W \rightarrow LA_{2j}\W(b-c))\]
We obtain $I_2\W[LA_{2j}\W(b-c)]$ by applying monotonicity on the
left conjunct of formula $(**)$.  Thus the
right disjunct of formula $(*)$ holds for $x_2 = b-c$.
\qed
\end{proof}
\end{techreport}


\def\vt{\vphantom{t}}

\section{An Example for the Combined Theory}
The previous examples showed how to use our technique to compute an
interpolant in the theory of uninterpreted functions, or the theory of linear
arithmetic. We will now present an example in the combination of these
theories by applying our scheme to a proof of unsatisfiability of the
interpolation problem
\begin{align*}
  A &\equiv t\leq 2a \land 2a \leq s \land f(a)=q\\
  B &\equiv s\leq 2b \land 2b \leq t+1 \land \lnot(f(b)= q)
\end{align*}
where $a$, $b$, $s$, and $t$ are integer constants, $q$ is a constant of the
uninterpreted sort $U$, and $f$ is a function from integer to $U$.

We derive the interpolant using Pudl\'ak's algorithm and the rules shown in this
paper. Note that the formula is already in conjunctive normal form. Since we
use Pudl\'ak's algorithm, every input clause is labelled with $\bot$ if it is
an input clause from $A$, and $\top$ if it is an input clause from $B$.  We
will simplify the interpolants by removing neutral elements of Boolean
connectives.

Since the variables $a$ and $b$ are shared between the theory of uninterpreted
functions and the theory of linear arithmetic, we get some theory combination
clauses for $a$ and $b$.  The only theory combination clause needed to prove
unsatisfiability of $A\land B$ is $a=b\lor\lnot(b\leq a)\lor\lnot(a\leq b)$
which has the partial interpolant $LA(x_1+x_2,0,\ifnewinterpolation F[EQ(x,x_1)]\else EQ(x,x_1)\fi)$\ifnewinterpolation{} where $F[G] \equiv x_1\leq -x_2 \land (x_1\geq -x_2 \rightarrow G)$\fi.  Here, $x_1$ is
used to purify\footnote{Note that we purify the conflict, i.\,e., the negated
  clause} $b\leq a$ and $x_2$ is used to purify $a\leq b$.

We get two lemmas from \laz: The first one, $\lnot(2a\leq s)\lor \lnot(s\leq
2b)\lor a\leq b$, states that we can derive $a\leq b$ from $2a\leq s$ and
$s\leq 2b$. Let $x_3$ be the variable used to purify $\lnot(a\leq b)$.  Note
that we purify the literals in the conflict, i.\,e., the negation of the
lemma. Then, this lemma can be annotated with the partial interpolant
$LA(2x_3-s,-1,\ifnewinterpolation 2x_3 \leq s\else \bot\fi)$.  We can resolve this lemma with the unit clauses from
the input to get $a\leq b$.
\begin{gather*}
\inferrule*
{ 
  \inferrule*[rightskip=\ifnewinterpolation2cm \else .5cm \fi]
  { \lnot(2a\leq s)\lor \lnot(s\leq 2b)\lor a\leq b : LA(2x_3-s,-1,\ifnewinterpolation 2x_3 \leq s\else \bot\fi) \\
    2a\leq s : \bot }
  { \lnot(s\leq 2b)\lor a\leq b : LA(2x_3-s,-1,\ifnewinterpolation 2x_3 \leq s\else \bot\fi) }\\
  s\leq 2b : \top }
{ a\leq b : LA(2x_3-s,-1,\ifnewinterpolation 2x_3 \leq s\else\bot\fi) }
\end{gather*}

The second \laz-lemma, $\lnot(t\leq 2a)\lor\lnot(2b\leq t+1)\lor b\leq a$,
states that we can derive $b\leq a$ from $t\leq 2a$ and $2b\leq t+1$. Let
$x_4$ be the variable used to purify $\lnot(b\leq a)$. Then, we can annotate
the lemma with the partial interpolant $LA(2x_4+t,-1,\ifnewinterpolation 2x_4 + t \leq 0\else\bot\fi)$ and propagate this
partial interpolant to the unit clause $b\leq a$ by resolution with input
clauses.
\begin{gather*}
\inferrule*{
  \inferrule*[rightskip=\ifnewinterpolation 2.5cm \else 1.5cm\fi]
  { \lnot(t\leq 2a)\lor\lnot(2b\leq t+1)\lor b\leq a : LA(2x_4+t,-1,\ifnewinterpolation 2x_4+t \leq 0\else\bot\fi)\quad
    t\leq 2a : \bot }
  { \lnot(2b\leq t+1)\lor b\leq a : LA(2x_4+t,-1,\ifnewinterpolation 2x_4+t \leq 0\else\bot\fi) } \\
   2b\leq t+1 : \top }
{ b\leq a : LA(2x_4+t,-1,\ifnewinterpolation 2x_4+t \leq 0\else\bot\fi) }
\end{gather*}

Additionally, we get one lemma from \euf,
$f(b)=q\lor\lnot(f(a)=q)\lor\lnot(a=b)$, that states that, given $f(a)=q$ and
$a=b$, by congruence, $f(b)=q$ has to hold. Let $x$ be the variable used to
purify $a=b$. Then, we can label this lemma with the partial interpolant
$f(x)=q$. Note that this interpolant has the form $I(x)$ as required by our
interpolation scheme. We propagate this partial interpolant to the unit clause
$\lnot(a=b)$ by resolving the lemma with the input clauses.
\[
\inferrule*
{ 
  \inferrule*[rightskip=.5cm]
  { f(b)=q\lor\lnot(f(a)=q)\lor\lnot(a=b) : f(x)=q \\ f(b)=q : \top}
  { \lnot(f(a)=q)\lor\lnot(a=b) : f(x)=q}\\
  f(a)=q : \bot}
{ \lnot(a=b) : f(x) = q }
\]

From the theory combination clause $a=b\lor
\lnot(b\leq a)\lor\lnot(a\leq b)$ 
and the three unit clauses derived above, we show a contradiction.  
We start by resolving with the unit clause $a=b$ using (\ref{rule:inteq}) and 
produce the partial interpolant $LA(x_1+x_2,0,f(x_1)=q)$.
\[
\inferrule*
{ a=b\lor\lnot(b\leq a)\lor\lnot(a\leq b) : LA(x_1+x_2,0,\ifnewinterpolation F[EQ(x,x_1)] \else EQ(x,x_1)\fi) \\ 
  \lnot(a=b) : f(x) = q}
{ \lnot(b\leq a)\lor\lnot(a\leq b) : LA(x_1+x_2,0,
\ifnewinterpolation F[f(x_1) = q] \else f(x_1) = q\fi) }
\]

The next step resolves on $b\leq a$ using (\ref{rule:intla}).  Note that we
used $x_1$ to purify $b\leq a$ and $x_4$ to purify $\lnot(b\leq a)$.  Hence,
these variables will be removed from the resulting partial interpolant. From
the partial interpolants of the antecedents, $LA(2x_4+t,-1,\ifnewinterpolation 2 x_4 + t \leq 0 \else \bot \fi)$ and
$LA(x_1+x_2,0,\ifnewinterpolation F[f(x_1)=q] \else f(x_1) = q\fi)$, we get the following components:
\begin{align*}
  c_1&=2
  &s_1&=t
  &k_1&=-1
  &F_1(x_4)&\equiv \ifnewinterpolation 2 x_4 + t \leq 0 \else \bot \fi \\
  c_2&=1
  &s_2&=x_2
  &k_2&=0
  &F_2(x_1)&\equiv \ifnewinterpolation F[f(x_1) = q] \else f(x_1) = q\fi
\end{align*}

These components yield $k_3=1\cdot (-1)+2\cdot 0+2\cdot 1=1$.  Furthermore,
$\ceilfrac{k_1+1}{c_1} = 0$ leads to one disjunct in $F_3$. The corresponding
values are $\floorfrac{-t}{2}$, resp.\ $-\floorfrac{-t}{2}$.
\ifnewinterpolation $F_1(\floorfrac{-t}{2})$ is always true and can be omitted.
\fi
The resulting formula $G(x_2) := F_3(\vec x)$ is
\ifnewinterpolation
\begin{align*}
  G(x_2)\equiv{} & -\floorfrac{-t}{2} \leq -x_2\land \left(\floorfrac{-t}{2}\geq -x_2\rightarrow f\left(-\floorfrac{-t}{2}\right)=q\right).
\end{align*}
\else
\begin{align*}
  G(x_2)\equiv{} &t+2\floorfrac{-t}{2}\leq 0\land\left(t+2\floorfrac{-t}{2}\geq 1\rightarrow \bot\right)\land{}\\
  &x_2-\floorfrac{-t}{2}\leq 0\land\left(x_2-\floorfrac{-t}{2}\geq 0\rightarrow f\left(-\floorfrac{-t}{2}\right)=q\right)\\
  {}\equiv{} & x_2-\floorfrac{-t}{2}\leq 0\land \left(x_2-\floorfrac{-t}{2}\geq 0\rightarrow f\left(-\floorfrac{-t}{2}\right)=q\right).
\end{align*}
Note that the first two conjuncts simplify to $\top$ and we remove
them.
\fi
The partial interpolant for the clause $\lnot(a\leq b)$ is
$LA(t+2x_2,1,G(x_2))$.
\[
\inferrule*{
  b\leq a : LA(2x_4+t,-1,\ifnewinterpolation 2x_4 + t \leq 0 \else \bot\fi) \\
  \lnot(b\leq a)\lor\lnot(a\leq b) : LA(x_1+x_2,0,f(x_1) = q) }
{
  \lnot(a\leq b) : LA(t+2x_2,1,G(x_2))
}
\]

In the final resolution step, we resolve $a\leq b$ labelled with partial
interpolant $LA(2x_3-s,-1,\ifnewinterpolation 2 x_3 \leq s \else \bot\fi)$ against $\lnot(a\leq b)$ labelled with
$LA(t+2x_2,1,G(x_2))$. Note that the literals have been purified with $x_3$ and
$x_2$, respectively. We get the components
\begin{align*}
  c_1&=2
  &s_1&=-s
  &k_1&=-1
  &F_1(x_3)&\equiv\ifnewinterpolation 2 x_3 \leq s \else \bot\fi\\
  c_2&=2
  &s_2&=t
  &k_2&=1
  &F_2(x_2)&\equiv G(x_2).
\end{align*}

We get $k_3=2\cdot (-1)+2\cdot 1 + 2\cdot 2=4$.  Again,
$\ceilfrac{k_1+1}{c_1}=0$ yields one disjunct in $F_3$ with the values
$\floorfrac{s}{2}$, and $-\floorfrac{s}{2}$, respectively.  
\ifnewinterpolation Again, $F_1(\floorfrac{s}{2})$ is always true and can be omitted.
\fi
The resulting
formula is
\ifnewinterpolation
\begin{align*}
H\equiv{}& G\left(-\floorfrac{s\vt}{2}\right)\\
{}\equiv{}& -\floorfrac{-t}{2} \leq \floorfrac{s\vt}{2}\land \left(\floorfrac{-t}{2}\geq \floorfrac{s\vt}{2}\rightarrow f\left(-\floorfrac{-t}{2}\right)=q\right).
\end{align*}
\else
\begin{align*}
H\equiv{}&{-s}+2\floorfrac{s\vt}{2}\leq 0\land\left(-s+2\floorfrac{s\vt}{2}\geq 1 \rightarrow\bot\right)\land{}\\
&t-2\floorfrac{s\vt}{2}\leq 0\land\left(t-2\floorfrac{s\vt}{2}\geq -1 \rightarrow G\left(-\floorfrac{s\vt}{2}\right)\right)\\
{}\equiv{}&t-2\floorfrac{s\vt}{2}\leq 0\land\left(t-2\floorfrac{s\vt}{2}\geq -1 \rightarrow -\floorfrac{s\vt}{2}-\floorfrac{-t}{2}\leq 0 \land{} \right.\\
&\quad\left.\left(-\floorfrac{s\vt}{2}-\floorfrac{-t}{2}\geq 0\rightarrow f\left(-\floorfrac{-t}{2}\right)=q\right)\right).
\end{align*}
Again, the first two conjuncts trivially simplify to $\top$ and can be removed.
\fi

The final resolution step yields an interpolant for this problem.
\[
\inferrule{
  a\leq b : LA(2x_3-s,-1,\bot) \\
  \lnot(a\leq b) : LA(t+2x_2,1,G(x_2))
}
{
  \bot : LA(-2s+2t,4,H)
}
\]

\ifnewinterpolation
Thus $H$ is the final interpolant.
\else
We obtain the final interpolant by unfolding the $LA$-form
and some simplifications:
\[
  t\leq s \land \left(t \geq s-2\rightarrow
  \left(t\leq 2\floorfrac{s\vt}{2}\land\left(-\floorfrac{-t}{2}\geq \floorfrac{s\vt}{2}\rightarrow
  f\left(-\floorfrac{-t}{2}\right)=q\right)\right)\right).
\]
\fi
%
Now we argue validity of this interpolant.
\paragraph{Interpolant follows from the $A$-part.}
\ifnewinterpolation
The $A$-part contains $2a\leq s$, which implies $a \leq \floorfrac{s}{2}$.
From $t\leq 2a$ we get $-\floorfrac{-t}{2}\leq a$.  Hence,
$-\floorfrac{-t}{2} \leq\floorfrac{s}{2}$.
Moreover, $-\floorfrac{-t}{2}\geq \floorfrac{s}{2}$ implies
$-\floorfrac{-t}{2}=a$.
So with the $A$-part we get $f(-\floorfrac{-t}{2})=q$.
\else
The interpolant expresses that $t\leq s$ holds, which can be deduced from 
the $A$-part.
Moreover from $2a\leq s$, we get $a \leq \floorfrac{s}{2}$.
With $t\leq 2a$ we get $t\leq 2\floorfrac{s}{2}$.  Finally, we show 
$-\floorfrac{-t}{2}\geq \floorfrac{s}{2} \rightarrow f(-\floorfrac{-t}{2})=q$.
Using $-2a \leq -t$, we get $-a \leq \floorfrac{-t}{2}$.
Hence, $a \leq \floorfrac{s}{2} \leq -\floorfrac{-t}{2} \leq a$ implies 
$a=-\floorfrac{-t}{2}$, so with the $A$-part $f(-\floorfrac{-t}{2})=q$
follows.
\fi

\paragraph{Interpolant is inconsistent with the $B$-part.}
The $B$-part implies $s\leq 2b \leq t+1$.  
Hence, we have $\floorfrac{s}{2} \leq b\leq \floorfrac{t+1}{2}$.
A case distinction on whether $t$ is even or odd yields
 $\floorfrac{t+1}{2} = -\floorfrac{-t}{2}$.  Therefore,
$\floorfrac{s}{2} \leq b\leq -\floorfrac{-t}{2}$ holds.
Hence, the interpolant guarantees $f(-\floorfrac{-t}{2})=q$ and
\ifnewinterpolation
$-\floorfrac{-t}{2} \leq \floorfrac{s}{2}$. 
\else
$t\leq 2\floorfrac{s}{2}$. The latter implies
$-\floorfrac{-t}{2} \leq \floorfrac{s}{2}$. 
\fi
Hence, $b=-\floorfrac{-t}{2}$
and with $f(b)\neq q$ from the $B$-part we get a contradiction.

\paragraph{Symbol condition is satisfied.}
The symbol condition is trivially satisfied since $\symb(A)=\{a,t,s,f,q\}$ and
$\symb(B)=\{b,t,s,f,q\}$.  The shared symbols are $t$, $s$, $f$, and $q$ which
are exactly the symbols occurring in the interpolant.


\ifnewinterpolation\else
A close inspection of the last proof reveals that $H$ is already a valid
interpolant of $A$ and $B$. 
This shows that in a certain sense the produced interpolants are not minimal.
It may be useful to investigate more closely which parts can be safely
omitted.
\fi

\section{Conclusion and Future Work}

We presented a novel interpolation scheme to extract Craig interpolants from
resolution proofs produced by SMT solvers without restricting the solver or
reordering the proofs.  The key ingredients of our method are virtual
purifications of troublesome mixed literals, syntactical restrictions of
partial interpolants, and specialised interpolation rules for pivoting steps
on mixed literals.

In contrast to previous work, our interpolation scheme does not need
specialised rules to deal with extended branches as commonly used in
state-of-the-art SMT solvers to solve \laz-formulae.  Furthermore, our scheme
can deal with resolution steps where a mixed literal occurs in both
antecedents, which are forbidden by other schemes~\cite{Cimatti2010,Goel2009}.

Our scheme works for resolution based proofs in the DPLL(T) context provided
there is a procedure that generates partial interpolants with our syntactic
restrictions for the theory lemmas.  We sketched these procedures for the
theory lemmas generated by either congruence closure or linear arithmetic solvers
producing Farkas proofs. 
In this paper, we limited the presentation to the combination of the theory of
uninterpreted functions, and the theory of linear arithmetic over the integers
or the reals.  Nevertheless, the scheme could be extended to support other
theories.  This requires defining the projection functions for mixed
literals in the theory, defining a pattern for 
\ifnewinterpolation\else weak and strong \fi partial
interpolants, and proving a corresponding resolution rule.

We plan to produce interpolants of different strengths using the
technique from D'Silva et al.~\cite{D'Silva2010}.  This is orthogonal to our
interpolation scheme (particularly to the 
\ifnewinterpolation partial \else weak and strong \fi
interpolants used for mixed literals).
Furthermore, we want to extend the correctness proof to show that our scheme 
works with inductive sequences of interpolants~\cite{mcmillan06lai} and tree
interpolants~\cite{HHP10}. We also plan to extend this scheme to other theories
including arrays and quantifiers.

\bibliographystyle{plain}
\bibliography{mixedInterpolation-techReport}

\end{document}